\numberwithin{equation}{section}
\theoremstyle{plain}             % (optional: default style)
\newtheorem{theorem}{Theorem}[section]
\newtheorem{lemma}[theorem]{Lemma}
\newtheorem{proposition}[theorem]{Proposition}
\renewcommand{\thesection}{\Alph{section}}
\renewcommand{\thesubsection}{\Alph{section}.\arabic{subsection}}
\title{A High-Level Framework for Practically Model-Independent Pricing}
\author{Marco Airoldi \\ Mediobanca, Piazzetta Enrico Cuccia 1, Italy \\ Marco.Airoldi@mediobanca.com}
\date{2025-12-02}
\begin{document}
\maketitle
\begin{abstract}
We present a high-level framework that explains why, in practice, 
different pricing models calibrated to the same vanilla surface 
tend to produce similar valuations for exotic derivatives. 
Our approach acts as an overlay on the Monte Carlo infrastructure 
already used in banks, combining path reweighting with a 
conic optimisation layer without requiring any changes to 
existing code. This construction delivers narrow, 
practically model-independent price bands for exotics, reconciling front-office 
practice with the robust, model-independent ideas developed in the academic 
literature.
\end{abstract}
%
% -----------------------------------------------------------------
\section*{Introduction}
In incomplete markets, the same exotic pay-off may yield markedly 
different prices under different forward-volatility assumptions, 
even if all models match the observed spot-volatility surface exactly.
No-arbitrage therefore fails to single out a unique fair value, so exotic 
derivatives must be priced under genuine model uncertainty.

\cite{Avellaneda1996_a,Avellaneda1996_b} pioneered robust hedging via
worst-case volatility scenarios, enforcing no-arbitrage constraints to
derive model-independent bounds.  \cite{Hobson1998} recast the problem
as a Martingale Optimal Transport (MOT) programme, subsequently
formalised by \cite{Beiglbock2013} and \cite{Labordere2014}.
Although theoretically elegant, MOT still faces scalability issues in
high-dimensional or path-dependent settings \cite{DolinskySoner2014}.
Monte Carlo techniques under model uncertainty--see 
\cite{Brigo2015}--underscore similar computational 
and dimensionality challenges for robust pricing. 
Within this paradigm, Avellaneda et al.\ \cite{Avellaneda2001} introduced the weighted Monte Carlo 
methodology, reweighting simulated paths to match observed vanilla smiles while staying as close as 
possible to a chosen prior measure. \\
Foundational results on static volatility surfaces due to
\cite{Dupire1994} and \cite{Breeden1978} clarified how option prices
encode marginal distributions, but left open the dynamics of forward
volatility.  Extensions to forward-start options by
\cite{Carmona2009} and \cite{Carr2010} highlighted further calibration
challenges.  Even sophisticated exotic-local-volatility constructions
\cite{Guyon2012} cannot eliminate the residual model risk emphasised 
by~\cite{Guo2018}. 

Entropic MOT. \;A recent line of work introduces entropic
regularisation to MOT, achieving faster convergence and enhanced
numerical stability through Sinkhorn-type iterations
\cite{Chen2024,DeMarch2018,Tang2025}.  These methods successfully
blend robustness with computational tractability, paving the way for
real-time applications in quantitative finance. \\
Relation to Schr\"odinger bridges and prior-invariance. Entropic 
MOT / Schr\"odinger bridge methods select, among all martingale 
measures matching vanilla data, the one closest to a reference 
prior $P_0$ in Kullback-Leibler divergence; the prior is thus 
ontologically privileged and the solution remains a ``bridge'' 
around it. In contrast, our augmented fixed-point convex 
program is constraint-centric and generator-invariant: once 
exact vanilla replication and shape constraints are enforced 
and the double variance penalty is active, the unique 
minimiser ($w^{\*}$ ,$\sigma^{\*}$) does not depend on the 
starting Monte Carlo generator, as long as the supplied 
paths span the same linear space of pay-offs. Formally, 
strict convexity yields uniqueness of the fixed-point limit, 
while projector equivalence of pay-off matrices implies 
equality of feasible sets and optima; hence any ``house''
simulator becomes exchangeable, and the result is prior-independent 
(see Theorem~\ref{thm:aug_uniqueness} and Proposition~\ref{prop:span_equiv}).
Intuitively, as we range over the multiverse of vanilla-consistent 
models--one for each weight vector $w$-the observable constraints 
compel all admissible dynamics to tell the same market story; 
variance penalties then act as an Occam selector that 
collapses these equivalent narratives into a single, 
generator-invariant representative.

\subsection*{Our Key Contribution}

We present a Monte Carlo reweighting framework that finally reconciles 
two extremes of option pricing: the concreteness of a calibrated model 
and the rigour of model--independent / robust pricing envelopes.  A single LP step lifts 
any ``house'' simulation into a multiverse of weighted paths, spanning 
robust limits yet collapsing back to the reference model when warranted.  
This framework turns the long--sought oxymoron of a ``model--independent model'' into 
practical reality, in a framework engineered for seamless front-office use.

Key capabilities:

\begin{enumerate}
  \item \textbf{Computational Scalability} \\
        Handles a Reverse Cliquet with up to 100 fixing dates and roughly
        1\,000 linear constraints in a couple of hours on an end-user
        CPU, using open source conic solvers.

  \item \textbf{Plug-and-Play}\\
        Runs on the institution's existing Monte Carlo paths, wrapping any
        legacy engine (e.g. Heston, Merton, constant volatility) in an
        "exoskeleton" that enforces vanilla consistency, no arbitrage and
        variance penalties without changing the underlying risk library or
        workflow (see Figure~\ref{smart_monte_carlo_workflow}).

  \item \textbf{Modelling model-independent prices }\\
        Our framework re--imagines model--independent pricing: rather than stopping
        at theoretical bounds, it shows that fully calibrated models naturally
        emerge from a constrained inversion. In doing so, it embeds model risk
        directly in the trading-desk workflow, justifying concrete models
        (e.g.\ Heston) while upgrading them into robust, computationally efficient 
        data--driven front--office pricing tools. 

  \item \textbf{Realistic Min--Max Bounds \& Beyond}\\
        Delivers a robust pricing envelope that preserves the forward-volatility skew and
        attains exact calibration to plain vanilla options via minimal
        weight adjustments.

\end{enumerate}

\subsection*{Reading structure.}

The paper is organized in two complementary layers of exposition. 
Sections~\ref{Section_methodology}, \ref{Section_B_inversion},~\ref{section_C_performance} 
form a self-contained narrative: \\
Sec.~\ref{Section_methodology} introduces the methodology, including a schematic workflow; \\
Sec.~\ref{Section_B_inversion} develops the advanced aspects; \\
Sec.~\ref{section_C_performance} presents the numerical results. \\
The Appendix follows the same logical progression but expands each step with full mathematical detail, 
algorithmic specifications, and auxiliary derivations. 

A glossary of all terms is provided in Appendix~\ref{app_glossary}.

\section{The methodology}
\label{Section_methodology}

\subsection{Flowchart Overview and Methodology Definition}
\textbf{The goal:}

This sub~Section introduces a step-by-step, flowchart-based pipeline 
(Figure~\ref{smart_monte_carlo_workflow}) designed for two main \textbf{use cases}: (i) reconstructing 
forward-start volatility surfaces and (ii) pricing path-dependent exotic 
derivatives. The overarching goal is to derive model-independent price ranges 
for exotics using only vanilla quotes, correctly reproducing the forward-skew 
shape while remaining fully consistent with the observed spot-volatility surface. 
Each stage of the pipeline is detailed in the next sub~Sections.
% %%%%%%%%%%%%------------------> Figure on Workflows
% %
% status final
\begin{figure}[h]
    \centering
    \includegraphics[width=0.4\textwidth]{./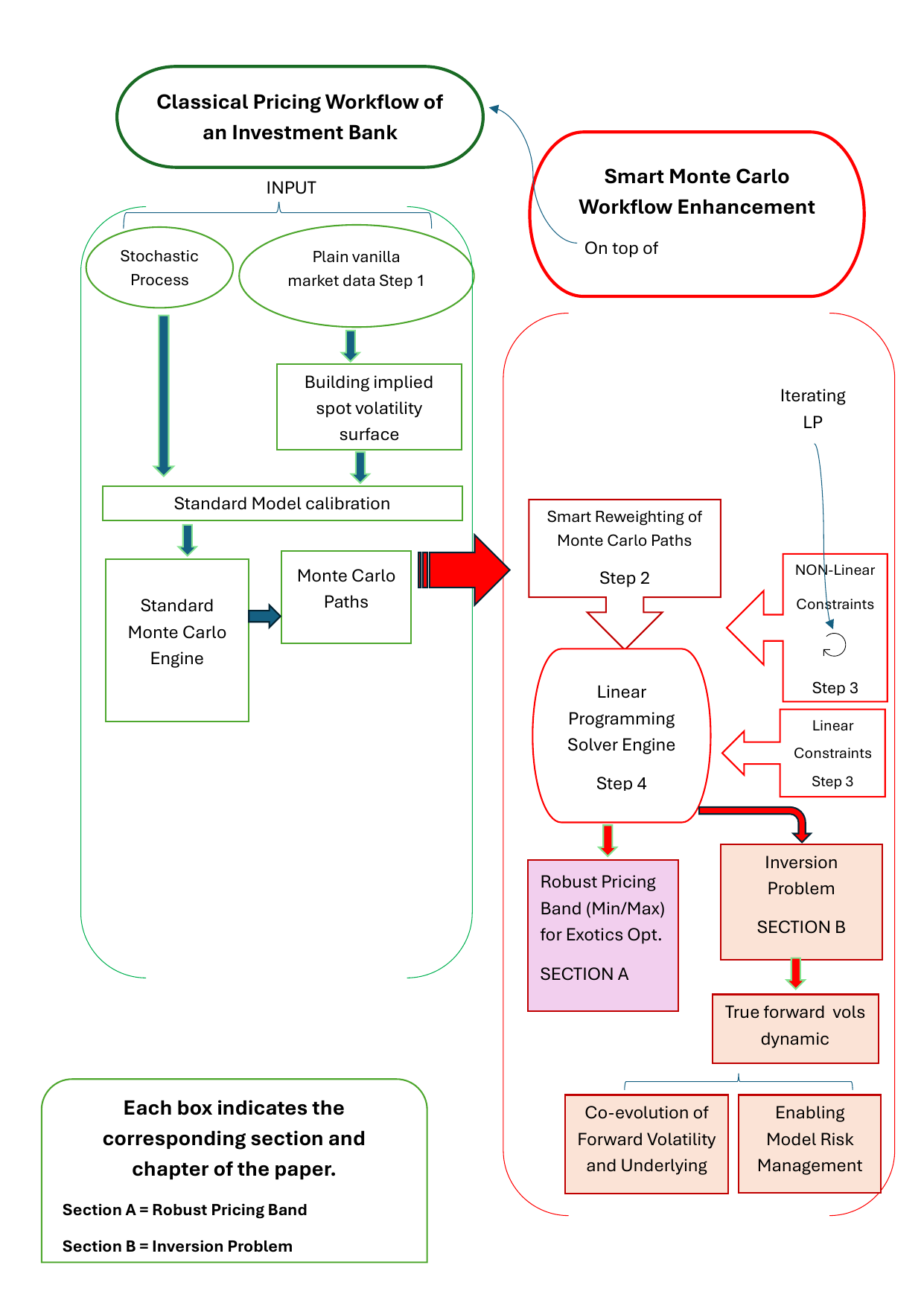}
    \caption{Smart Monte Carlo Workflow and Its Interaction with the Classical Workflow of an Investment Bank.
            The light-red blocks represent the standard front-office pricing 
            loop; the orange overlay shows the Smart Monte Carlo 
            reweighting step that enforces vanilla consistency and 
            robust min-max bounds without modifying the legacy 
            risk library.}
    \label{smart_monte_carlo_workflow}
\end{figure}

\subsection*{Step 1: Plain Vanilla Market Data (INPUT)}

The primary input for our framework consists of plain--vanilla option prices 
observed from the market. In practice we ingest these data in a single, 
fully explicit format:

The market data inputs can originate from:

\begin{enumerate}
    \item \textbf{Real market data}: Coming from plain vanilla markets.

    \item \textbf{Synthetic parametric data}: Generated by assigning specific 
    parameter values in a parametric formula.

    \item \textbf{Synthetic model-based data}: Artificially constructed using 
    a chosen set of calibrated parameters within a stochastic model.
\end{enumerate}

\subsection*{Step 2: Monte Carlo Path Generation with Variable Weights}
The next step generates Monte Carlo paths under a chosen stochastic 
model, such as the Merton jump-diffusion or Heston stochastic-volatility 
model.

We consider two main distinct calibration scenarios:

\begin{itemize}
    \item Natively calibrated scenario: Model parameters are predefined and 
    assumed correct without calibration to market data. Thus,
    plain vanilla quotes are replicated by construction, i.e.:
    \begin{equation*} \sum_{i=1}^{N}(w_i - \frac{1}{N}) \cdot \text{Payoff}_{i}(K,T) = 0 \end{equation*}
    highlighting that uniform weights $w_i = \frac{1}{N}$ constitute the most natural 
    and parsimonious solution, although other solutions might also 
    exist.

    \item Market-calibrated scenario: Model parameters are calibrated to 
    market-observed vanilla prices. Due to inevitable calibration 
    imperfections, uniform weights generally will not replicate market 
    prices exactly. 
\end{itemize}

\subsection*{Step 3: Constraints }

\begin{itemize}
\item {\bf Linear constraints}
\begin{itemize}
  \item \textbf{Non-negative path weights:} the path weights $w_i$, satisfying
    \begin{equation*}
      w_i \ge 0, \quad \sum_i w_i = 1
    \end{equation*}
    are the primary decision variables to be determined.

    \item \textbf{Vanilla and digital replication:} 
    For each maturity we impose exact replication of both plain-vanilla options 
    and digital options (implemented as tight call spreads on the same strike grid):
    \begin{equation*}
      \sum_{i} w_i\,\mathrm{Payoff}^{\mathrm{van}}_i(K,T)
      \;=\;
      P^{\mathrm{van}}_{\mathrm{market}}(K,T),
      \qquad
      \sum_{i} w_i\,\mathrm{Payoff}^{\mathrm{dig}}_i(K_d,T)
      \;=\;
      P^{\mathrm{dig}}_{\mathrm{market}}(K_d,T).
    \end{equation*}
    Matching digital quotes fixes the local slope of the implied-volatility smile 
    at the selected strikes, so that the calibration reproduces not only the level 
    of the spot-volatility skew but also its first derivative with respect to strike.

  \item \textbf{No-arbitrage:} enforce monotonicity in strike and convexity of butterfly spreads.  

\end{itemize}
\label{list_linear_constraints}

\item {\bf Non-linear constraints (Affine linearization)}

Non-linear constraints are indispensable if we want to extend 
the method's scope to include forward volatilities, because 
these volatilities are inherently non-linear: the mapping that 
links the option price-expressible as a linear combination 
of $\{w_i\}$ to volatility is itself non-linear.

Incorporating forward implied volatilities 
therefore enables us to model:
\begin{itemize}
  \item \textbf{The forward surface reconstruction}
  \item \textbf{Total-variance conservation:}
  In practice we anchor the total variance at each maturity using the 
  Carr-Madan variance-swap identity, rather than the static 
  forward-variance formula. 
  (See Appendix~\ref{app:formulation_and_implementation}.)
\end{itemize}

\end{itemize}
All of this will be laid out in Step~4. Keep in mind that this step is needed only if we want to add the 
non-linear conditions discussed above; if we limit ourselves to option prices that can already be written 
as a linear combination of pay-off weights, Step~4 can be omitted.

\subsection*{Step 4: Fixed-point scheme for forward volatility}

Up to Step 3 all constraints are linear in the scenario-weights $w_i$.
As soon as we introduce forward-start volatilities $\sigma_{i,j}$, a 
nonlinear relation appears: for each fixing date $t_i$ and strike $K_j$ 
the Black-Scholes formula gives a nonlinear map
\[
  \sigma_{i,j}
  \;\longmapsto\;
  P^{\mathrm{Fwd}}_{i,j}(\sigma_{i,j}),
\]
and the forward-start prices $P^{\mathrm{Fwd}}_{i,j}$ themselves depend 
linearly on the weights $w$. The composition produces a nonlinear 
constraint linking $(w,\sigma)$.

To keep the optimisation convex, we do not impose the full Black-Scholes 
relation inside the program. Instead, at each outer iteration we fix a 
current guess $\sigma^{\text{prev}}_{i,j}$ and linearise the map 
$\sigma \mapsto P^{\mathrm{Fwd}}_{i,j}(\sigma)$ around this point. 
For each $(t_i,K_j)$ we obtain an affine approximation that links 
$\sigma_{i,j}$ to the reweighted forward price $P_{i,j}(w)$. This 
affine relation can be enforced as a linear equality constraint inside 
an LP or SOCP.

The resulting algorithm has the following structure.

\begin{enumerate}[leftmargin=1.6em]
  \item \textbf{Initialisation.}
        Start from the Monte Carlo generator used to produce the paths 
        and uniform weights $w_i \equiv 1/N$. Compute the corresponding 
        forward-start prices $P^{\text{prev}}_{i,j}$, implied forward 
        volatilities $\sigma^{\text{prev}}_{i,j}$, and Vegas.

  \item \textbf{Inner convex problem.}
        Solve a convex program in the variables $(w,\sigma)$ that 
        includes all linear constraints from Step 3, plus the affine 
        constraints obtained from the Taylor expansion of the 
        Black-Scholes price in $\sigma$. Optional quadratic penalties 
        can be added on the dispersion of the weights and on the 
        dispersion of the forward-volatility slice.

  \item \textbf{Fixed-point update.}
        After solving the convex problem, update the expansion point by 
        setting $\sigma^{\text{prev}} \leftarrow \sigma$ and recompute 
        the forward-start prices and Vegas under the new weights. 
        Repeat the inner optimisation until the change in the 
        forward-start prices or in the implied forward volatilities is 
        below a prescribed tolerance.
\end{enumerate}

In this way the nonlinearity between prices and forward volatilities is 
handled by a short fixed-point loop, while each inner step remains a 
standard convex optimisation problem. At convergence, the forward 
volatilities that appear in the convex constraints are consistent, up to 
numerical tolerance, with the Black-Scholes implied volatilities 
associated with the reweighted forward-start prices.

\subsection*{Step 5: Min-Max optimisation of exotic option prices and resulting weights}
Once the optimal Monte-Carlo path weights $\{w_i\}$ have been 
determined, the price of any derivative pay-off is

\begin{equation*} D\;=\;\sum_{i} w_i\,F_i, \end{equation*}

where $F_i$ denotes the discounted pay-off along path $i$. We next 
minimise and maximise this linear functional to obtain tight lower 
and upper limits for the exotic instrument under consideration. 

This single optimisation step therefore delivers both the interval 
$[D_{\min},D_{\max}]$ and the corresponding extremal weight sets.

\subsection{Use Cases of the Method}
We highlight three primary applications that demonstrate the flexibility 
and robustness of our framework:

\subsubsection{Forward Volatility Surface Reconstruction}

    Recall that $\sigma_{\text{Fwd-static}}(T1,T2)$ is the model-independent 
    variance-additive forward-volatility (def. appendix A.1), whereas 
    $\sigma_{\text{Fwd-model}}$ is obtained by repricing the forward-start 
    under the candidate dynamics.

    The resulting linear program solution yields a model-independent 
    "envelope" (upper and lower limits) for the forward volatility 
    surface. Typically, we generate two opposite forward volatility 
    surfaces---one maximizing and the other minimizing either the 
    curvature of the ATM volatility surface or the value of the 
    forward-start ATM call option---in a model-independent fashion. 
    Figure~\ref{fig_forward_surfaces_implied_heston_vs_merton} 
    illustrates these cases for a 
    Reverse Cliquet pay-off (see~\ref{gloss:reverse_cliquet}); the number of fixing dates
    can be scaled up to 100 dates.

\subsubsection{Pricing Exotic Derivatives}

\begin{itemize}
    \item Once the optimal path weights are determined, we can price any 
    exotic pay-off by taking the appropriately weighted average of its 
    discounted pay-off. 
% %%%IT CITA Same-day pricing di un reverse cliquet: sposta lesempio early nella Sezione A2 per mostrare subito una pay-off desk-relevant

\end{itemize}

\subsubsection{Consistency and limiting-case checks.}
To validate the internal consistency of the proposed methodology, we perform two benchmark tests
that verify its behaviour in well-understood limiting cases.
(i)~As the start delay of a forward-start exotic tends to zero, the computed minimum and maximum prices
converge to those of the corresponding spot option, confirming the correct transition to the standard spot case.
See Figure~\ref{fig_fw_vol_surface_convergence_to_spot}.
(ii)~For reverse cliquet structures, increasing the number of fixing dates leads to a clear convergence
towards a constant limit value, in line with theoretical expectations~\cite{Hobson1998, Cox2011}.
These checks demonstrate that the framework consistently reproduces known asymptotic limits
and provides a reliable baseline for the numerical experiments discussed in Section~C.

%%%%%%%%%%%%------------------> Figures for forward volatility surfaces convergence to spot vols
%
\begin{figure}[h]
    \centering
    \includegraphics[width=0.6\textwidth]{./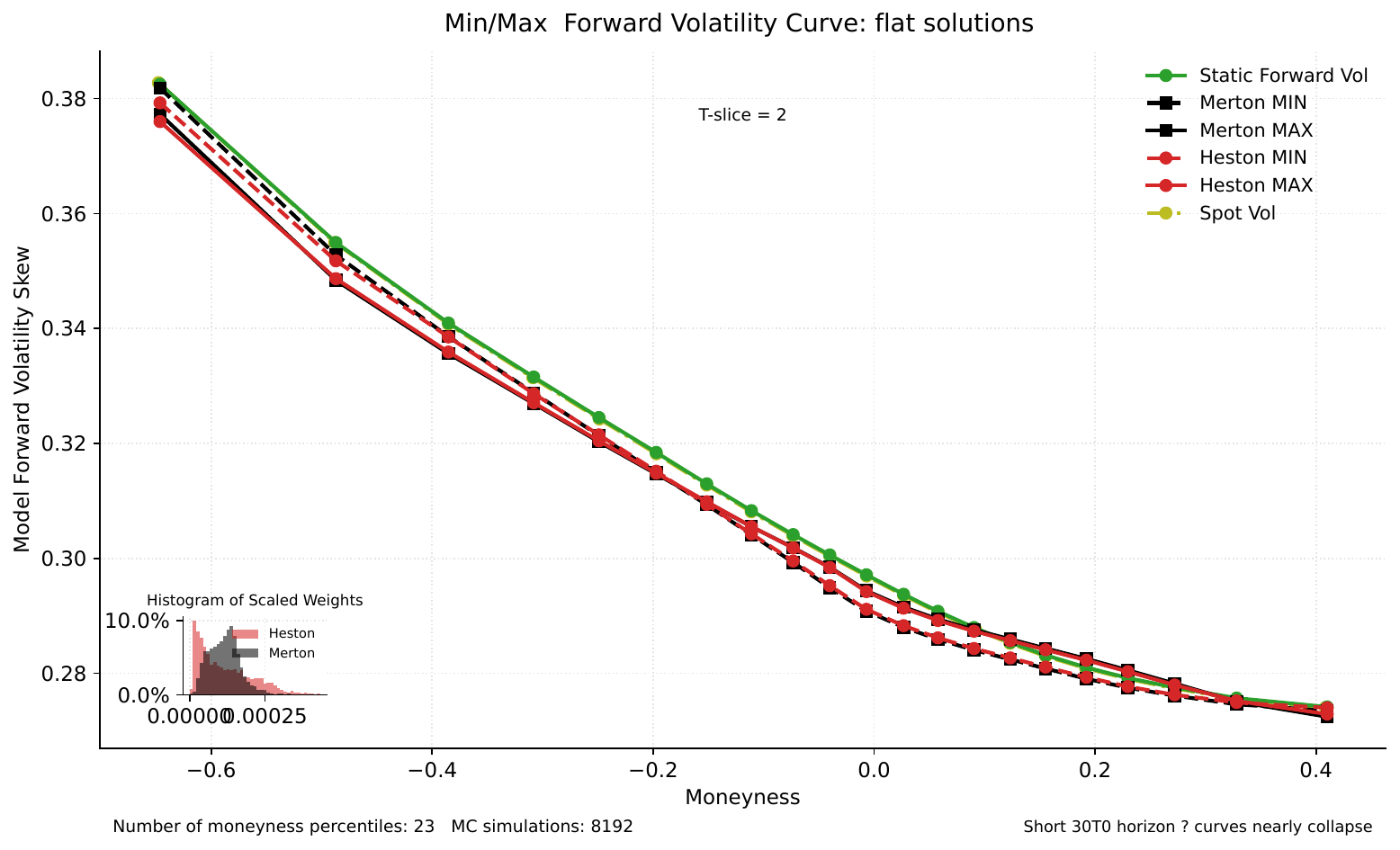}
    \caption{Forward-volatility surface converging to the spot-volatility 
     surface as the forward start date $T_1$ tends to zero.}
    \label{fig_fw_vol_surface_convergence_to_spot}
\end{figure}
%
%%%%%%%%%%%%%------------------> End Figures for forward volatility surfaces convergence to spot vols
At very short maturities the framework naturally enforces a tight connection 
between the spot implied volatility surface and the reconstructed forward surface. 
When the grid of fixing dates includes a dense sequence of points close to 
today (for example, daily nodes over the first few weeks or months), each short-dated 
slice acts as an additional linear constraint on the forward-start prices. 
In the limit where the start date of a forward-start option tends to zero, the 
pay-off degenerates into a standard spot call and its implied volatility must 
coincide with the spot slice. By construction, the optimisation inherits 
this behaviour: as $T_1 \to 0$, the model forward volatility 
$\sigma_{\mathrm{Fwd\text{-}model}}(K;T_1,T_2)$ is forced to merge with the 
spot smile $\sigma_{\mathrm{spot}}(K;T_2)$, up to numerical tolerance.

This short-maturity anchoring has an important practical consequence. Instead of relying 
only on a sparse set of maturities (for example 3M, 6M, 1Y), the Smart Monte Carlo 
calibration can ingest a rich term structure of short-dated smiles and 
propagate their information into the whole forward surface. The collective effect 
of many closely spaced spot slices severely reduces the degrees of freedom of the 
forward-start component. 
In particular, exotics whose accrual starts near today are priced under a 
forward surface that is not only calibrated to vanillas at each maturity, but also 
continuously glued to the observed spot surface in the $T_1 \to 0$ limit.

By avoiding strong parametric assumptions and direct model calibration, 
the proposed method offers a powerful, robust, and easy to implement 
framework for exotic derivative pricing. The feasible region of our 
linear optimisation corresponds to all ``admissible'' forward--volatility 
structures, while its extreme points represent scenarios that either 
maximize or minimize each exotic pay-off.

A more detailed performance analysis is presented in 
Section~\ref{section_C_performance}.

% ================================================================
% Section B Advanced Applications Problem (LaTeX version)
% ================================================================

\section{From the Inversion Problem to an Independent Model }
\label{Section_B_inversion}
\setcounter{subsection}{0}
\addtocounter{subsection}{-1}
\renewcommand{\thesubsection}{\thesection.\arabic{subsection}}

% ================================================================
% B.0 Executive Summary (Revised)
% ================================================================

\subsection*{Executive Summary}

Section~\ref{Section_B_inversion} presents the core theoretical structure of the framework and explains
the numerical phenomena observed in the motivating example. The material can
be read at three different levels.

(1) Level 1: numerical behaviour and motivating example.
The sub~Section opens with a detailed stress test based on the Reverse Cliquet,
one of the most challenging pay-offs in terms of model risk and sensitivity
to forward-volatility skew. The pay-off aggregates information across many
fixing dates, and we consider an extreme configuration with up to 100 fixing
dates and highly stressed generators (including large jump amplitudes and
high stochastic-volatility regimes). Despite these adverse conditions, the
three optimisation regimes (raw min-max, minimum dispersion of forward-vol
increments, and joint minimum dispersion of weights and forward-vol) show a
remarkable contraction of the admissible price interval. Differences between
distinct generators collapse from model-scale deviations to a few cents of
price. This behaviour provides the empirical motivation for the theory.

(2) Level 2: inversion, enlargement, and structural uniqueness.
The first part of the theory analyses the inversion problem on a fixed path
grid. Vanilla replication determines a convex set of admissible weight vectors.
A unique distribution is selected by minimising weight dispersion, which
identifies the calibrated measure closest to the ideal barycentric configuration.
When latent coordinates such as forward-start volatilities are 
introduced, the space of admissible models becomes under-determined 
again unless a new dispersion penalty is added.
This explains why additional variance blocks (for example
on the forward-volatility slice) are required to recover uniqueness. The enlarged
framework remains convex and selects a single representative model inside the
expanded feasible set.

(3) Level 3: technical layer and model set equivalence.
At the most granular level we examine the internal structure of the optimised
weights and how different path sets can be compared. Rather than enforcing a
fully isotropic Monte Carlo ensemble, we rely on simple diagnostics to prevent
degeneracy and to test whether two generators span the same pay-off space. In
practice this takes the form of a mass-splitting test for path-set equivalence
and an interior barrier that keeps weights in the interior of the simplex.
Together these tools provide the technical backbone that guarantees the
stability and robustness of the extremal valuations developed 
in subsection~\ref{sub_section_B_reverse_cliquet}.

The sub~Section concludes with a hierarchical view of expressive power. Each time
a new block of latent variables is added (for example $\sigma$, then $\gamma$), the
descriptive language of the framework expands. This enlargement increases the
class of pay-offs that can be represented, but does not remove incompleteness.
The analogy with G\"odel's incompleteness theorem captures this structure: the
constraints and variance penalties act as axioms, and the admissible calibrated
configurations correspond to the theorems derivable from them. Enriching the
model extends the language but cannot exhaust it, a property that becomes
visible in the numerical behaviour of the Reverse Cliquet.

% ---------------------------------------------------------------------------
\subsection{Motivating Example: Reverse Cliquet Stress Test}
\label{sub_section_B_reverse_cliquet}

The Reverse Cliquet is one of the most demanding non-linear pay-offs for any 
model-independent approach. It aggregates forward return increments across 
many fixing dates, is extremely sensitive to forward-volatility skew, and its natural 
bid-ask width tends to shrink as the number of fixings increases. This makes 
it an ideal stress test for weighted Monte Carlo reweighting.

We deliberately select three generators with fundamentally different 
structures:

\begin{itemize}
  \item \textbf{Merton jump diffusion}, exhibiting discontinuous paths and 
        jump driven forward-volatility skew.
  \item \textbf{Heston stochastic volatility}, producing smooth but highly 
        non-linear volatility dynamics.
  \item \textbf{Pure Black model}, a one parameter baseline 
        with no ability to generate skew, used as the weakest possible prior.
\end{itemize}

Despite their structural differences, the reweighting scheme forces all 
three priors to converge towards tightly constrained valuations under the 
three optimisation regimes described below. To interpret the numerical 
results in trading desk language, we decompose price uncertainty into an 
intra model bid-ask component and a model risk component, and we express 
both in relative form.
%
%
% Forward vols for different regimes 
% ---------------------------------------------------------------------------
% Macro-Figura 
% ---------------------------------------------------------------------------
\FloatBarrier
\begin{figure}[t]
  \centering
% ---------------------------  (a) Heston vs Merton.  ------------------------------
  \begin{subfigure}[t]{0.45\linewidth}
    \centering
    \includegraphics[width=\linewidth]{./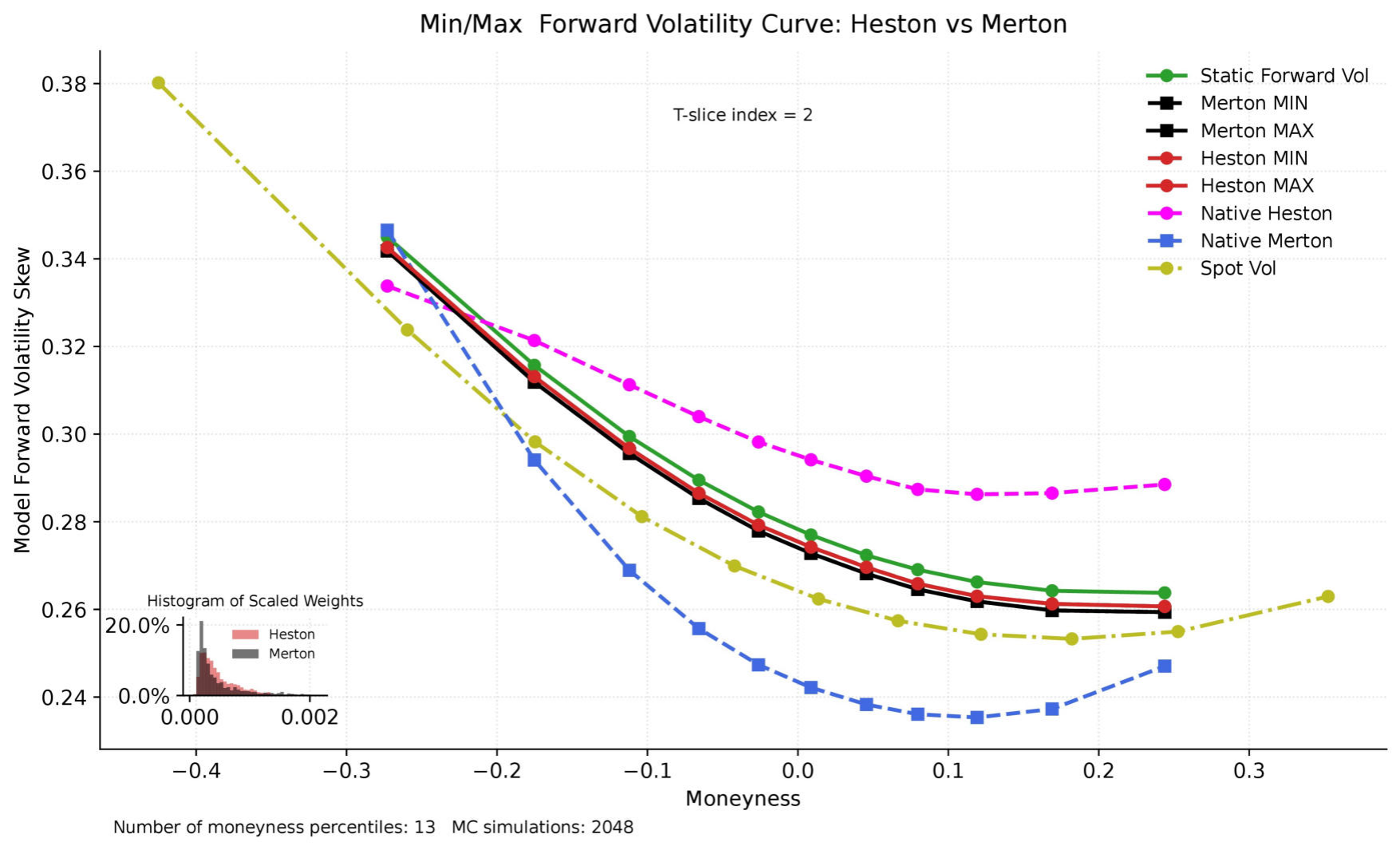}
    \caption{Heston vs Merton}
    \label{fig_forward_surfaces_implied_heston_vs_merton}
  \end{subfigure}
  \hfill
% ---------------------------  (b) Heston vs pure Black  ---------------------------------
  \begin{subfigure}[t]{0.45\linewidth}
    \centering
    \includegraphics[width=\linewidth]{./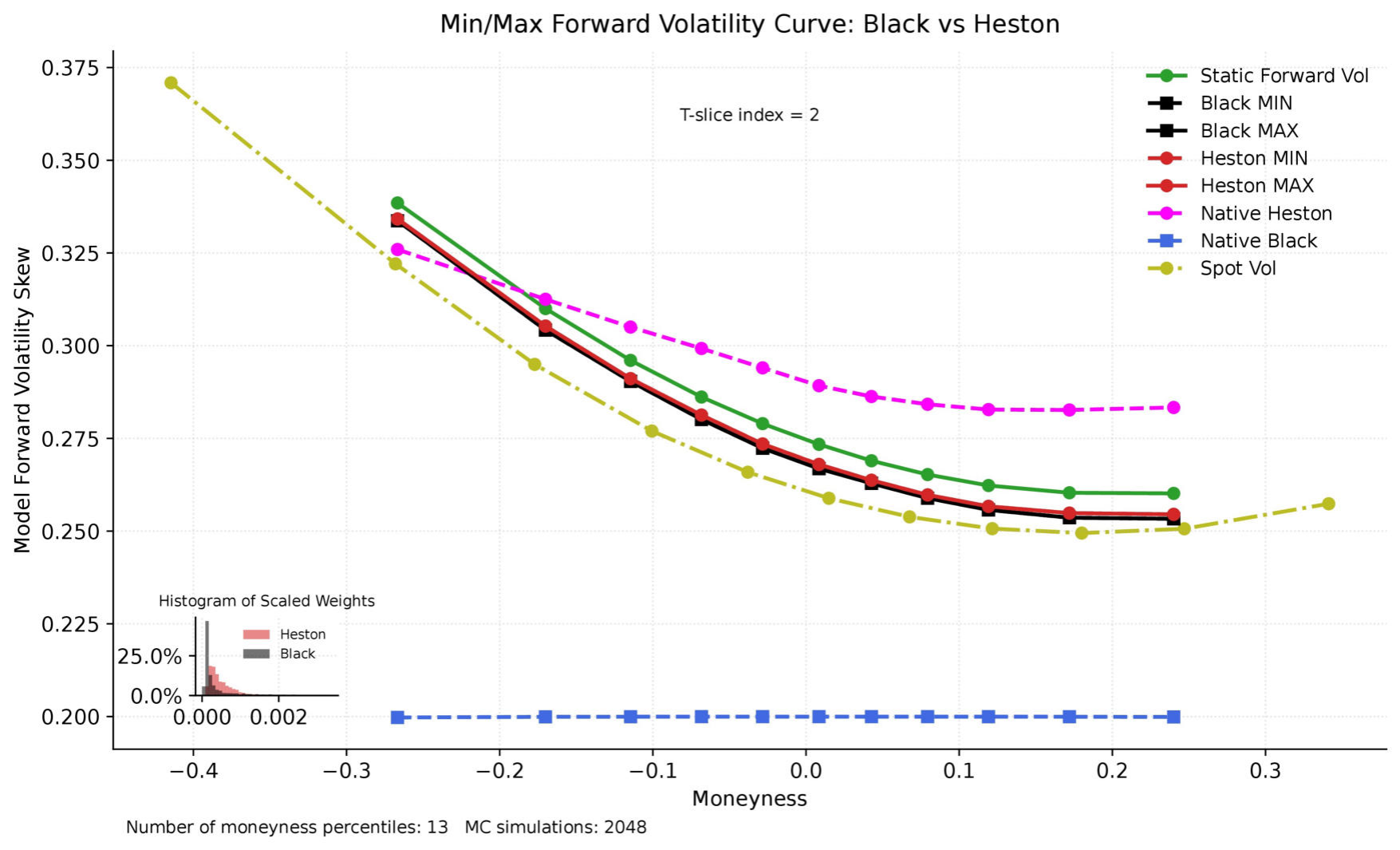}
    \caption{Heston vs Black}
    \label{fig_forward_surfaces_implied_heston_vs_black}
  \end{subfigure}
  \hfill
% ------------------------  caption globale  -----------------------------
  \caption{Forward-volatility skew reconstruction under 
  different generators. Panel (a): Heston vs Merton; 
  panel (b): Heston vs Black. For each prior, the min-max reconstruction 
  (coloured curves) collapses onto a narrow band, illustrating 
  generator-invariant forward skew once the constraints and variance 
  penalties are imposed.
  }
  \label{fig_forward_surfaces_implied}
\end{figure}
\FloatBarrier

For a given generator $g$ and a fixed number of fixing dates $l$, denote by
\[
  D_{\min}^{(g)}(l), 
  \qquad 
  D_{\max}^{(g)}(l)
\]
the lower and upper prices delivered by the optimisation. The associated 
model mid is
\[
  m^{(g)}(l) := \frac{1}{2}\Bigl(D_{\min}^{(g)}(l) + D_{\max}^{(g)}(l)\Bigr).
\]

\paragraph{Relative intra model spread.}
The absolute intra model band is 
\[
  S_{\mathrm{intra}}^{(g)}(l) := D_{\max}^{(g)}(l) - D_{\min}^{(g)}(l).
\]
We report instead the \emph{relative} intra model spread, normalised by 
the mid and expressed in percent:
\[
  S_{\mathrm{intra,rel}}^{(g)}(l)
  :=
  \frac{D_{\max}^{(g)}(l) - D_{\min}^{(g)}(l)}
       {m^{(g)}(l)}
  \times 100.
\]
From the viewpoint of a trader committed to model $g$, 
$S_{\mathrm{intra,rel}}^{(g)}(l)$ is the natural bid-ask width on the 
Reverse Cliquet, measured as a percentage of the model mid.
% run time vs all 
% ---------------------------------------------------------------------------
% Macro-Figura spread A/B
% ---------------------------------------------------------------------------
\FloatBarrier
\begin{figure}[t]
  \centering
% ---------------------------  (a) Raw min max.  ------------------------------
  \begin{subfigure}[t]{0.32\linewidth}
    \centering
    \includegraphics[width=\linewidth]{./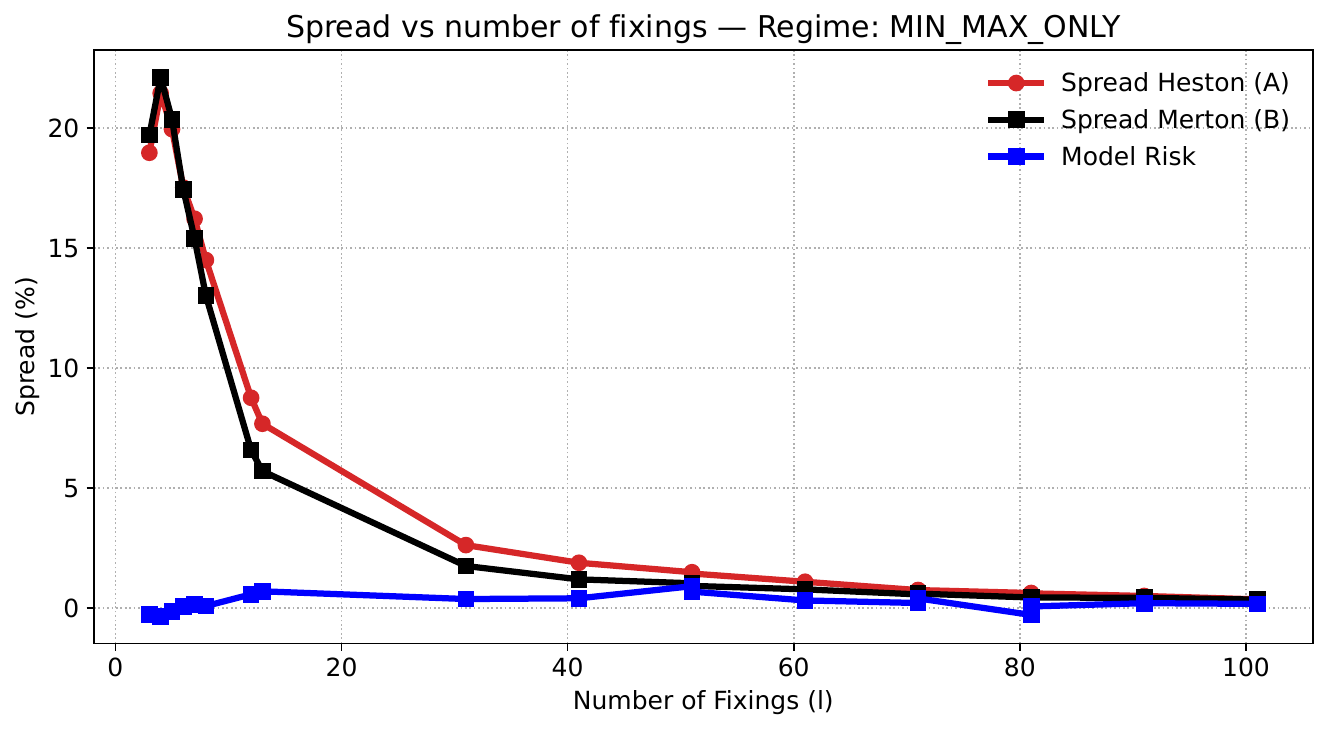}
    \caption{Raw min max}
    \label{fig_rev_cliquet_minmax}
  \end{subfigure}
  \hfill
% ---------------------------  (b) Forward-variance minimisation  ---------------------------------
  \begin{subfigure}[t]{0.32\linewidth}
    \centering
    \includegraphics[width=\linewidth]{./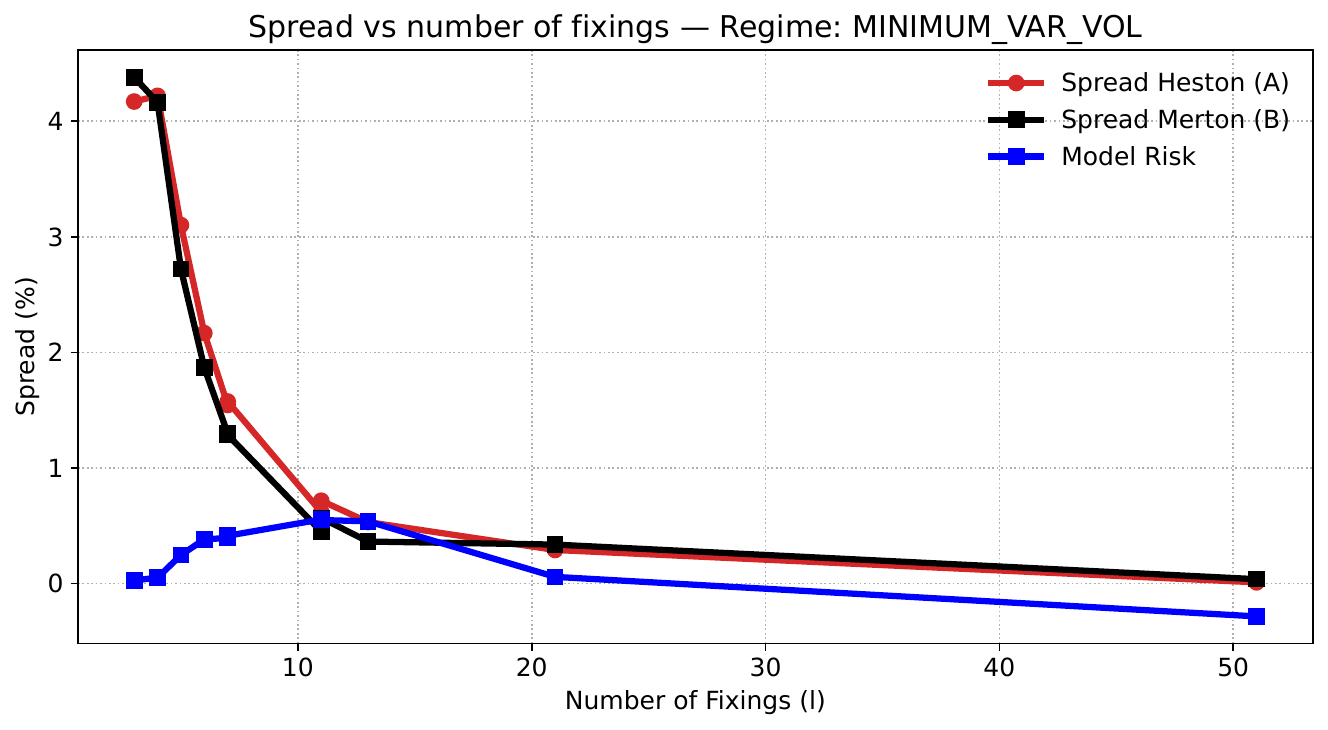}
  
    \caption{minimum dispersion of forward-vol}
    \label{fig_rev_cliquet_minvar_sigma}
  \end{subfigure}
  \hfill
% ---------------------------  (c) Joint minimisation  -----------------------
  \begin{subfigure}[t]{0.32\linewidth}
    \centering
    \includegraphics[width=\linewidth]{./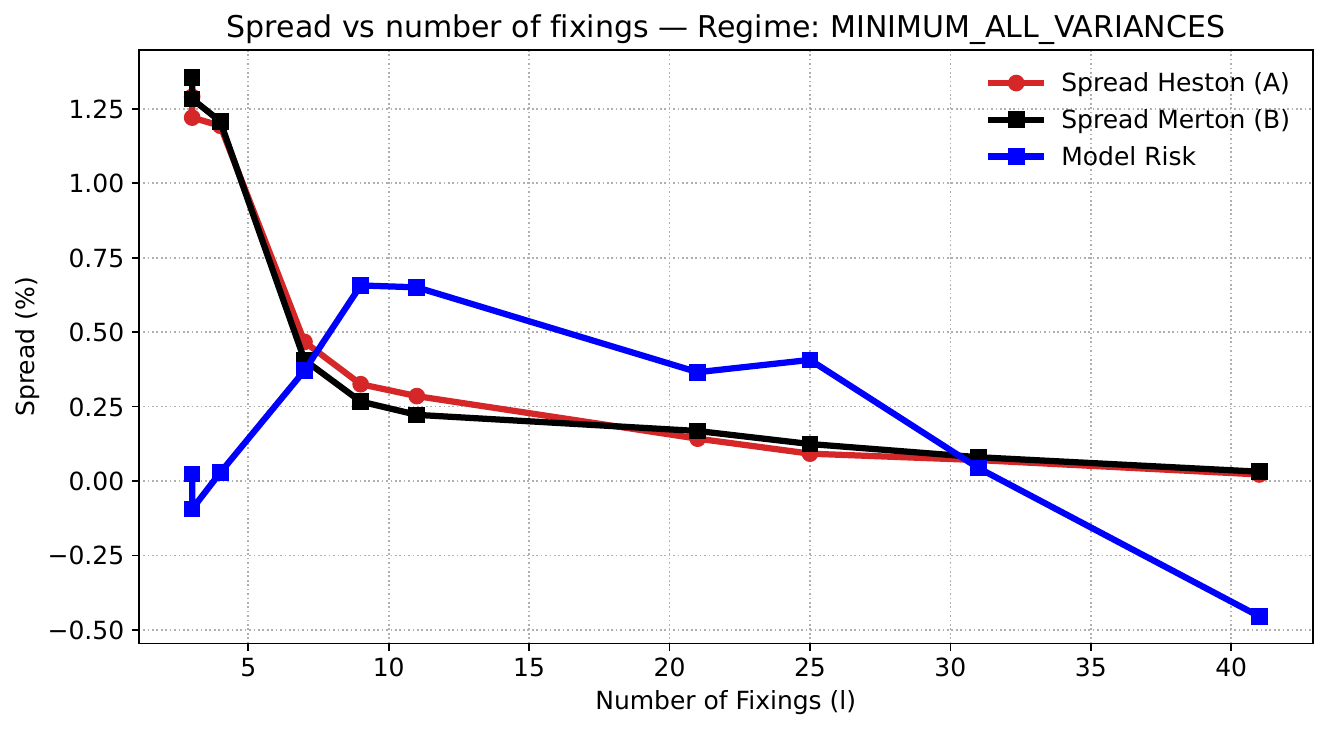}
    \caption{joint minimum dispersion of weights and forward-vol}
    \label{fig_rev_cliquet_minvar_both}
  \end{subfigure}
% ------------------------  caption globale  -----------------------------
  \caption{The three panels report minimum and maximum reverse cliquet 
   prices for parameter set 2 (intermediate volatility regime; see Table~\ref{general_settings} 
   in Appendix~\ref{subsection_Simulation_Settings}.
  (a)  Raw min max
  (b)  minimum dispersion of forward-vol
  (c)  joint minimum dispersion of weights and forward-vol
  }
  \label{fig_reverse_cliquet_convergence}
\end{figure}
\FloatBarrier
% ---------------------------------------------------------------------------

\paragraph{Relative model risk component.}
To quantify model risk on price, we compare the mids produced by two 
generators, say Heston and Merton. The absolute model risk component is
\[
  \Delta_{\mathrm{MR}}(l)
  :=
  m^{(\mathrm{Heston})}(l) - m^{(\mathrm{Merton})}(l).
\]
We again work with a relative, symmetric normalisation:
\[
  \Delta_{\mathrm{MR,rel}}(l)
  :=
  \frac{
    m^{(\mathrm{Heston})}(l) - m^{(\mathrm{Merton})}(l)
  }{
    \tfrac{1}{2}\Bigl(m^{(\mathrm{Heston})}(l) + m^{(\mathrm{Merton})}(l)\Bigr)
  }
  \times 100.
\]
The quantity $\vert\Delta_{\mathrm{MR,rel}}(l)\vert$ measures the relative 
distance, in percent, between the mids of two equally plausible models, 
normalised by their average level. The same construction can be applied to 
any pair of priors, including the constant volatility plug in model.

\subsubsection*{Three optimisation regimes and uncertainty decomposition}

\begin{itemize}

  \item \textbf{Raw min max.}
        Only vanilla constraints are imposed. The admissible price interval 
        $[D_{\min}^{(g)}(l),D_{\max}^{(g)}(l)]$ decreases as the number of 
        fixings increases, but the relative intra model width 
        $S_{\mathrm{intra,rel}}^{(g)}(l)$ stabilises at a small but strictly 
        positive plateau. This reflects the structural fact that forward 
        skew is not determined by vanilla data alone. Empirically, for fixed 
        $l$ the quantities $S_{\mathrm{intra,rel}}^{(\mathrm{Heston})}(l)$ 
        and $S_{\mathrm{intra,rel}}^{(\mathrm{Merton})}(l)$ are very close, 
        so the intra model bid-ask is essentially generator independent. 
        The relative model risk $\vert\Delta_{\mathrm{MR,rel}}(l)\vert$ 
        remains of the same order as $S_{\mathrm{intra,rel}}^{(g)}(l)$ and 
        shrinks with $l$, but does not vanish.

  \item \textbf{Forward volatility variance minimisation.}
        Penalising the dispersion of the reconstructed forward volatility 
        slice reduces the admissible interval for each generator and forces 
        their forward volatility surfaces to almost coincide. On prices, 
        this has two effects. First, the relative intra model width 
        $S_{\mathrm{intra,rel}}^{(g)}(l)$ shrinks further and remains nearly 
        identical across generators. Second, the relative model risk 
        $\vert\Delta_{\mathrm{MR,rel}}(l)\vert$ becomes very small on the 
        natural scale of a few percent of the option premium, corresponding 
        to a few cents per 100 notional. The total residual uncertainty can 
        now be read as the combination of a narrow intra model bid-ask and a 
        modest, but still visible, model risk component.

  \item \textbf{Joint minimisation of weight variance and forward volatility variance.}
        In the third regime we minimise simultaneously the variance of the 
        weights and the variance of the forward volatility slice. 
        Conceptually, if these two variance objectives were perfectly aligned, 
        the feasible set would collapse to a single, most parsimonious 
        calibrated configuration. In that idealised case the admissible band 
        would reduce to a single point,
        \[
          S_{\mathrm{intra,rel}}^{(g)}(l) = 0, \qquad
          D_{\min}^{(g)}(l) = D_{\max}^{(g)}(l) = D_{\ast}^{(g)}(l),
        \]
        and there would be no intra model bid-ask ambiguity at all.

        In practice the dispersion of $w$ and the dispersion of $\sigma$ pull 
        in slightly different directions. There is no single weight vector 
        that minimises both objectives exactly, and the optimisation must 
        choose a compromise. When this compromise is fixed for a simple 
        reference pay-off and then used to price the Reverse Cliquet, the 
        relative intra model width $S_{\mathrm{intra,rel}}^{(g)}(l)$ becomes 
        numerically negligible at the scale relevant for trading. What 
        remains visible in the plots is the relative model risk 
        $\vert\Delta_{\mathrm{MR,rel}}(l)\vert$, that is, the gap between 
        $D_{\ast}^{(\mathrm{Heston})}(l)$ and $D_{\ast}^{(\mathrm{Merton})}(l)$ 
        expressed as a percentage of their average level. The residual spread 
        is therefore entirely concentrated into model risk.

\end{itemize}

This mechanism is entirely different from the residual width observed in the 
raw min max regime, which originates from the structural inability of vanilla 
options to determine forward-volatility skew. Under joint variance minimisation, the 
residual band is instead the consequence of the internal tension between the 
two variance criteria and manifests itself as a pure model risk component.

\subsubsection*{Computational highlight: one hundred fixing dates}

To further stress the method, we consider a configuration with one hundred 
fixing dates, corresponding to roughly one thousand linear constraints. 
This setting pushes the dimensionality and numerical stiffness of the 
problem to an extreme level. Despite this, the solver delivers stable and 
smooth extremal forward volatility surfaces and a relative price interval 
for the Reverse Cliquet well below one percent of the normalised pay-off. 
Achieving such robustness at this resolution is a strong indication of the 
scalability of the method.

\subsubsection*{Numerical outcome}

Across all regimes the admissible price interval contracts as the number of 
fixings increases. In the first two regimes, both the relative intra model 
width $S_{\mathrm{intra,rel}}^{(g)}(l)$ and the relative model risk 
$\vert\Delta_{\mathrm{MR,rel}}(l)\vert$ become small but remain clearly 
non zero. In the joint variance regime the construction concentrates the 
residual uncertainty almost entirely into the model risk component, with 
$S_{\mathrm{intra,rel}}^{(g)}(l)$ essentially zero on the relevant trading 
scale.

The Reverse Cliquet thus provides a transparent illustration of how the 
Smart Monte Carlo framework produces generator invariant prices and how the 
remaining uncertainty can be decomposed, in desk language, into an intra 
model bid-ask component and a model risk component that can, in principle, 
be computed for any exotic pay-off admitting min max bounds.

% ---------------------------------------------------------------
\subsection{Inversion Problem}

\subsubsection{The Ideal Inversion Problem: Uniqueness from First Principles}
\label{subsec:B1-ideal}

In the ideal setting, the Monte Carlo paths are generated by a process that already matches all observed vanilla prices. In this case the calibration constraints determine a single admissible probability measure. The weights $w_i$ are required to satisfy
\[
  w_i \ge 0, 
  \qquad \sum_i w_i = 1,
  \qquad \sum_{i=1}^{N} w_i \, \text{Payoff}_i(K,T) = P_{\text{mkt}}(K,T)
\]
for every strike-maturity pair (K,T). These conditions define a convex and typically low-dimensional feasible region.

If the simulated paths reproduce the entire vanilla surface exactly, the feasible 
region collapses to the barycentric point $w_i = 1/N$. In this ideal case the 
calibration recovers the uniform distribution and the underlying stochastic dynamics are uniquely identified from market prices. The barycentric solution is therefore the signature of a model that spans the correct pay-off space and requires no further adjustment.

Whenever the simulated paths do not reproduce the vanilla surface exactly, the feasible region becomes nontrivial and the weights deviate from the barycentric configuration. This deviation reflects structural mismatch: the paths do not encode the complete information carried by the market smiles. Large departures from 1/N indicate misspecification or lack of descriptive richness, while small departures signal near-identification.

In summary, the ideal inversion problem has a unique solution precisely when the simulated paths already satisfy the vanilla constraints. Uniform weights reveal that the generator is fully compatible with the market data, while non uniformity signals that the inversion requires an additional selection principle, developed in the next subsection.

% ---------------------------------------------------------------
\subsubsection{Practical Inversion: Minimising Weight Dispersion}
\label{subsec:B1-practical}

Outside the ideal case of exact vanilla replication, the weights cannot be
expected to remain uniform. Calibration noise, bid-ask fluctuations, 
discretisation effects, and generator limitations all contribute to 
deviations from the barycentric configuration. The objective of practical
inversion is therefore to select, among all admissible weight vectors, the one
that stays as close as possible to the ideal point.

To measure this deviation, we minimise the variance of the weights:
\[
\min_{w}\; \operatorname{Var}_{i}\!\left(w_i - \frac{1}{N}\right)
\]
subject to all vanilla replication and no arbitrage constraints. The problem is
convex and admits a unique minimiser.

A small variance indicates that the simulated paths already capture most of the
information encoded in the vanilla surface and that the underlying process is
nearly identified. A large variance reveals structural mismatch: the path set
does not span the correct pay-off space and requires substantial reweighting to
match market data. The dispersion of the weights therefore provides a direct
and quantitative indicator of model adequacy.

\paragraph{Statistical meaning of the min var criterion.}
Let $p_i$ denote the discounted pay-off on path $i$. The weighted estimator
\[
  \widehat{P}(w) = \sum_i w_i p_i
\]
has leading order standard deviation
\[
  \sigma_{\text{price}}
  \approx
  \sqrt{
        \sigma_P^2
        + \bar{p}^2 \sigma_W^2
        + 2 \bar{p} \, \text{Cov}(P,W)
      },
\]
where $\sigma_W$ is the standard deviation of the weights. Since the contribution
proportional to $\sigma_W$ dominates whenever the pay-off distribution is non
degenerate, the estimator error increases with the dispersion of the weights.
Minimising weight variance therefore stabilises the estimator and reduces its
sensitivity to perturbations.

The same criterion that selects the weight vector closest to the barycentric
configuration also minimises the statistical error of the weighted Monte Carlo
estimator. Practical inversion and estimator robustness are therefore aligned
under the min var principle.

\subsubsection{Conditional inversion: pinning an additional observable}

The inversion principle extends naturally to cases in which an additional
linear observable must be matched. Let $C(w) = c^{*}$ be any linear functional 
of the scenario-weights. Adding this condition to the feasible set,
\begin{equation}
   \min_{w}\ \text{Var}(w_i - 1/N)
   \qquad\text{s.t.}\quad
   C(w) = c^{*},
  \label{eq_inversion_theorem_min_variance_COND}
\end{equation}
preserves convexity and yields a unique solution. The fixed-point loop used in 
Subsections \ref{subsec:B1-ideal}~-~\ref{subsec:B1-practical} applies without modification.

This conditional inversion mechanism allows one to incorporate additional
moment constraints or structurally meaningful indicators (for instance forward
variance at a given node) while retaining the uniqueness and stability of the
min-variance solution. It provides a flexible extension of the inversion 
framework whenever one wishes to anchor the calibrated distribution to one 
extra observable.

\subsection{Extending the Framework and G\"odel-type Incompleteness}
\label{subsec:B2}

Vanilla replication determines the admissible weight vector only up to a low dimensional family of solutions. Exotic valuations, however, depend on additional latent variables that are not fixed by vanilla constraints alone. To handle these degrees of freedom in a principled way, we extend the inversion framework by introducing additional state variables and corresponding variance penalties. This enrichment increases the expressive power of the optimisation while preserving convexity, and reveals an intrinsic form of structural incompleteness that persists at every level.

% ---------------------------------------------------------------
\subsubsection{Double Variance Objective and Enlargement of the State Space}

Vanilla constraints determine the weights w only partially. When exotic valuations 
depend on latent quantities such as the forward-start volatility slice $\sigma$, 
several $\sigma$-configurations may still fit the same vanilla surface. 
To stabilise these latent coordinates and remove the residual degrees of freedom, we introduce a second variance penalty and solve
\[
  \min_{w,\sigma}
      \text{Var}[w]
      + \lambda_{\sigma} \, \text{Var}(\sigma - \sigma^{prev})
\]
subject to all vanilla replication and no arbitrage constraints, and to the affine relations linking $\sigma$ to the forward-start prices.

The first term controls the dispersion of the weights and keeps the 
solution close to the ideal inversion of Subsection~\ref{subsec:B1-ideal}. 
The second term controls the dispersion of the forward-volatility 
slice and ensures that the update $\sigma \rightarrow \sigma^{prev}$ 
remains smooth across iterations. Both terms are convex quadratic 
forms, and the combined optimisation remains a convex SOCP.

Introducing $\sigma$ enlarges the descriptive language of the method. 
Once $\sigma$ is part of the state vector, the framework can 
represent and constrain forward-start volatilities directly, something 
that is impossible with weights alone. The optimisation then selects 
a single representative element among all $\sigma$ that are 
compatible with vanilla prices.

The same mechanism applies to additional latent coordinates. For example, non adjacent forward-start volatilities, conditional forward-start structures, or barrier-related quantities can be represented by introducing a third block $\gamma$ and a corresponding variance penalty. Each new block increases the expressive power of the framework while preserving convexity, and each penalty removes the new degrees of freedom created by the enlargement.

% ---------------------------------------------------------------
\subsubsection{G\"odel-type Incompleteness Analogy}

Even after introducing several variance-penalised blocks, the framework does not become fully complete. Each enlargement of the state vector (for example by adding $\sigma$, then $\gamma$, and so on) expands the set of pay-offs and volatility functionals that can be represented. However, new products can always be constructed that lie outside the enlarged language and would require yet another block of variables to be captured. No finite construction stabilises the process.

This behaviour parallels the classical incompleteness phenomenon introduced by Kurt G\"odel. In G\"odel's setting, a formal system based on a finite set of axioms can generate many theorems, but can never capture all truths expressible in its own language. Extending the system by adding new axioms enlarges the set of provable theorems but does not eliminate incompleteness; new undecidable statements inevitably appear.
This analogy is purely structural and not meant as a formal correspondence; it serves only to emphasise the persistent incompleteness of any finitely-extended calibration framework.

In our context, the calibration constraints and variance penalties play the role of axioms, while the admissible calibrated configurations ($w$, $\sigma$, $\gamma$, ...) correspond to the theorems derivable from these axioms. Adding a new block of latent variables is equivalent to enriching the formal language: it resolves some degrees of freedom but creates a larger space in which incompleteness reappears. No finite number of extensions can encode every conceivable pay-off or implied-volatility functional.

This hierarchy of expressive levels is summarised in Table~\ref{table:model_richness}. Level I corresponds to weights alone; Level II includes the forward-volatility slice $\sigma$; Level III includes an additional block $\gamma$. Higher levels follow the same pattern: each enlargement increases expressiveness but never removes incompleteness entirely.
\begin{table}[htbp]
  \centering
  \setlength{\tabcolsep}{6pt}
  \renewcommand{\arraystretch}{1.15}

  \caption{Model richness versus solver complexity}
  \label{table:model_richness}

  \begin{tabular}{@{} l l l l @{}}
    \toprule
    \textbf{Level} & \textbf{Decision variables} &
    \textbf{Products captured} & \textbf{Solver class / cost} \\
    \midrule
    I   & $w$               & Linear pay-offs: vanilla prices, forward prices & LP (low) \\[2pt]
    II  & $w,\sigma$        & Adjacent forward-start implied vols            & SOCP (medium) \\[2pt]
    III & $w,\sigma,\gamma$ & Conditional/non-adjacent forward vols,\newline barrier vols & enlarged SOCP (high) \\
    \bottomrule
  \end{tabular}
\end{table}
% ---------------------------------------------------------------

% ---------------------------------------------------------------
\subsection{Isotropic Monte Carlo}
\label{subsec:isotropic_MC}

In a standard Monte Carlo simulation with uniform weights, the empirical
distribution of paths is statistically symmetric and exhibits no preferred
direction. When weights are optimisation variables, this symmetry is no longer
guaranteed. Since extremal valuations are highly sensitive to the behaviour of
the weights, it is natural to ask whether one can impose a meaningful notion of
Monte Carlo isotropy.

\subsubsection{Concept and limitations of isotropy}

Given any physically relevant ordering of the paths (for example by terminal
underlying value), define the cumulative mass profile
\[
f(n) = \sum_{i=1}^{n} w_{(i)} .
\]
For a uniform Monte Carlo sample, one expects
\[
f_{0}(n) = n/N
\]
up to sampling fluctuations of order $\sqrt{x(1-x)/N}$ with $x=n/N$. This
motivates the idea of keeping the empirical distribution inside a fluctuation
envelope that mimics a standard Monte Carlo simulation.

However, enforcing such an envelope along a single ordering direction is
insufficient. True isotropy is inherently multi-dimensional: to be meaningful,
the envelope would need to hold simultaneously along every projection or
rotation of the path cloud. This would require an exponentially large family of
constraints and is therefore computationally infeasible. In practice, when one
imposes isotropy only along one direction, the optimiser satisfies the envelope
by concentrating the mass at one extreme of the sorted sample, producing a
degenerate and non-physical solution. This behaviour is structural: a free weight
optimisation cannot reproduce the multidirectional fluctuation structure of a
genuinely isotropic Monte Carlo ensemble.

\subsubsection{Model set equivalence}

The above limitation motivates a different diagnostic: instead of forcing
isotropy, we test whether two path sets carry the same descriptive content for
pricing and calibration. Given two sets $A$ and $B$, we build a mixed pool and
enforce a mass splitting constraint
\[
\sum_{i\in A} w_i = \lambda , \qquad 0 \le \lambda \le 1 .
\]
If both sets span the same pay-off space under the imposed constraints, the
optimiser returns $\lambda \approx 1/2$. If $\lambda$ moves towards $0$ or $1$,
one set dominates the other. Under linear objectives one expects
\[
f^{star} \approx \lambda f^{star}_{A} + (1-\lambda) f^{star}_{B} ,
\]
so that $\lambda$ directly measures the marginal contribution of each subset.
Impoverishing one set (for example by repeating paths) deliberately moves
$\lambda$ away from $1/2$, providing a robustness check. This criterion serves as
a practical replacement for multi directional isotropy, capturing equivalence at
the level that matters operationally.

\subsubsection{A workable alternative: the interior barrier}

Since exact isotropy is not achievable in a practical or meaningful way, we use
a softer mechanism that preserves statistical balance without enforcing a
multi-dimensional fluctuation envelope. An interior logarithmic barrier keeps
all weights strictly within the interior of the simplex and suppresses 
non-physical spikes or boundary collapse. The barrier is purely geometric: it has no
probabilistic interpretation, does not refer to any prior distribution and does
not induce any entropic interpolation. Its implementation and derivatives are
reported in Appendix~\ref{app_internal_barrier}.

\medskip

In summary, exact isotropy is conceptually well defined but operationally
intractable. The combination of the mass splitting test and the interior barrier
provides the practically relevant counterpart: it prevents degeneracy, ensures
balanced weights and supports the stability of the extremal valuations developed
in Section~\ref{Section_B_inversion}.

% ================================================================
% Section Computational Performance of the Smart Monte Carlo
% Place *after* the numerical-proofs section; adjust numbering if
% this becomes an appendix (e.g. \section{Appendix}).
% ================================================================

\section{Computational Performance}
\label{section_C_performance}

The analysis of numerical performance focuses on two complementary aspects: 
(i) how the precision of the method improves with the number of Monte Carlo scenarios, 
and (ii) how the computational cost scales with the main problem dimensions, namely 
the number of scenarios, the number of vanilla constraints, and the number of fixing dates. 
Together, these scaling laws define the efficiency envelope of the Smart-MC algorithm.

\subsection{Error scaling and precision}

The central question is how numerical precision evolves as the number $N$ of simulated scenarios increases. For a fixed estimator, classical Monte Carlo theory implies that the root-mean-square error (RMSE) decays as
$ \mathrm{RMSE}(N) \approx C/{\sqrt{N}}$
where $C$ is a problem-dependent constant determined by the pay-off variance (Central Limit Theorem; see Glasserman~\cite{Glasserman2003}). In our constrained framework, pricing conditions (such as matching vanilla and barrier pay-offs) act as control variates and reduce this constant for any given $N$, but they do not alter the fundamental $N^{-1/2}$ behaviour.

The nontrivial effect comes from the optimization step used to compute the scenario weights. As $N$ increases, the dimension of the optimization problem grows and the solver requires more work per additional scenario. Consequently, the effective constant in the Monte Carlo error becomes a slowly varying function $C(N)$, and the RMSE can be written as
\[
\mathrm{RMSE}(N) \approx \frac{C(N)}{\sqrt{N}}.
\]
Over the range of $N$ relevant for calibration, this dependence is well captured by a power law $C(N) \propto N^{\alpha}$ with $\alpha \ge 0$, which leads to the effective scaling
$ \mathrm{RMSE}(N) \sim N^{-\left(\tfrac{1}{2} - \alpha\right)}$,
to be interpreted as a correction to the benchmark $N^{-1/2}$ rate. The parameter 
$\alpha$ summarizes the additional computational burden induced by the solver

Moreover, if the Monte Carlo experiment is replicated $M$ times with 
independent random seeds, each producing $N$ paths, the corresponding 
weight vectors can be concatenated into a single feasible sample of 
size $MN$. The variance of the resulting estimator then scales 
as $1/\sqrt{MN}$, while the total wall-clock time grows approximately 
linearly in $M$, since the $M$ optimizations can be executed in parallel.
In practice, a small number of such replications is usually sufficient to achieve
stable and accurate results (see Glasserman~\cite{Glasserman2003}).

% ---------------------------------------------------------------------------
% Macro figure: runtime vs N, m, l
% ---------------------------------------------------------------------------
\begin{figure}[t]
  \centering
  % (a) runtime vs N_mc
  \begin{subfigure}[t]{0.27\linewidth}
    \centering
    \includegraphics[width=\linewidth]{./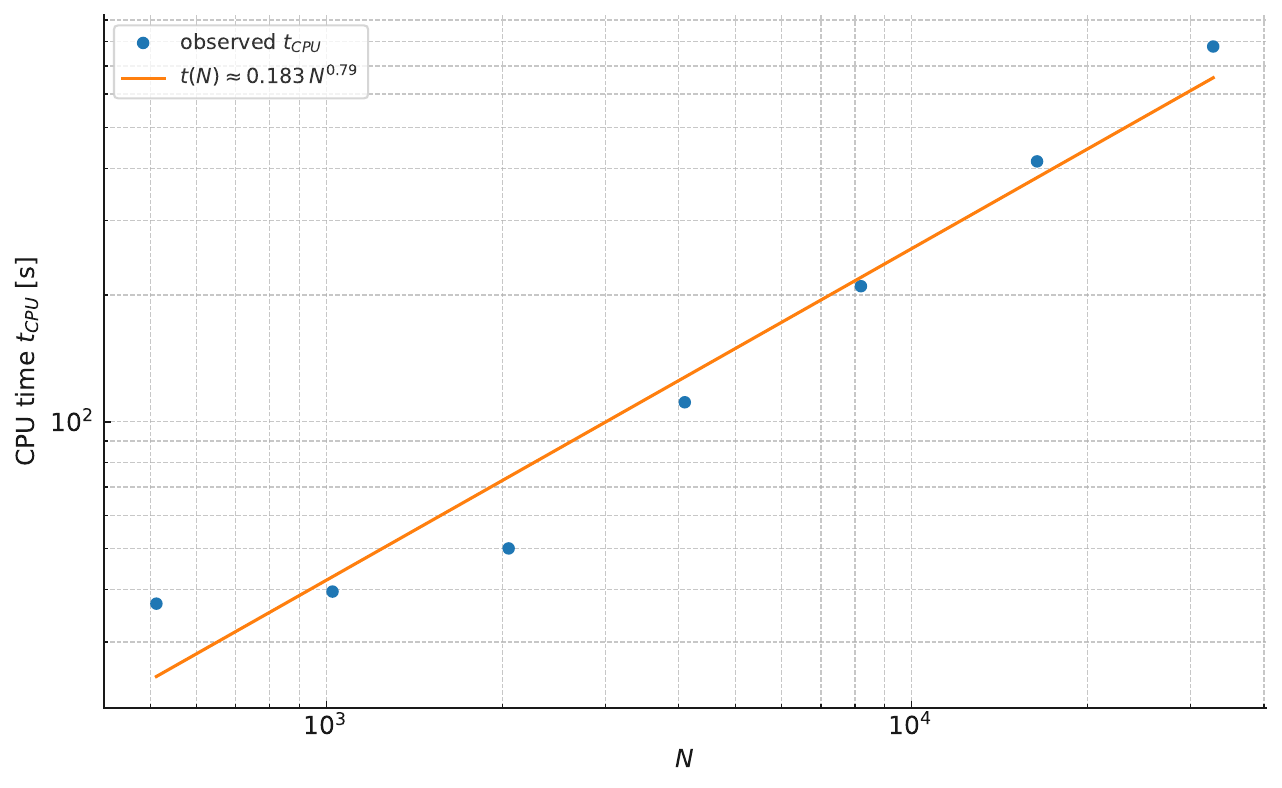}
    \caption{Runtime vs.\ scenarios $N$ (log--log scale).\\
             Power--law fit $\text{Time}(N) \approx 0.183\, N^{0.79}$.\\
             Other parameters: strike grid size $m=11$, fixing dates $l=10$.}
    \label{fig_runtime_N}
  \end{subfigure}
  \hfill
  % (b) runtime vs m
  \begin{subfigure}[t]{0.32\linewidth}
    \centering
    \includegraphics[width=\linewidth]{./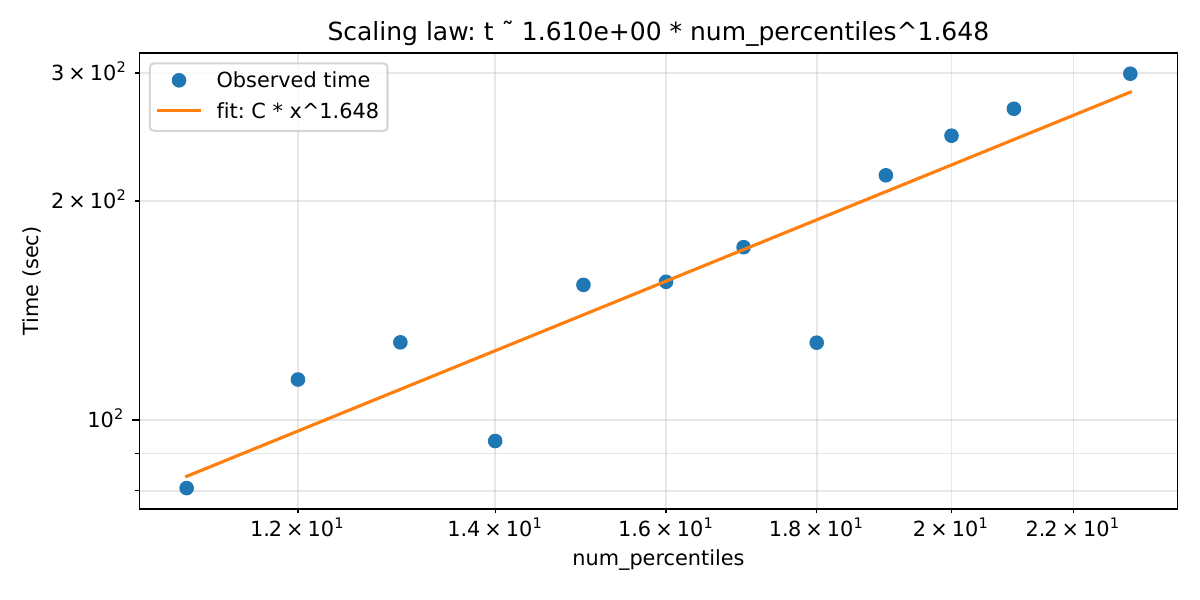}
    \caption{Runtime vs.\ strike grid size $m$ (log--log scale).\\
             Power--law fit of the form $\text{Time}(m) \approx 1.61\, m^{1.648}$.\\
             Other parameters: $N = 2^{12}$, fixing dates $l=10$.}
    \label{fig_runtime_m}
  \end{subfigure}
  \hfill
  % (c) runtime vs fixing dates
  \begin{subfigure}[t]{0.32\linewidth}
    \centering
    \includegraphics[width=\linewidth]{./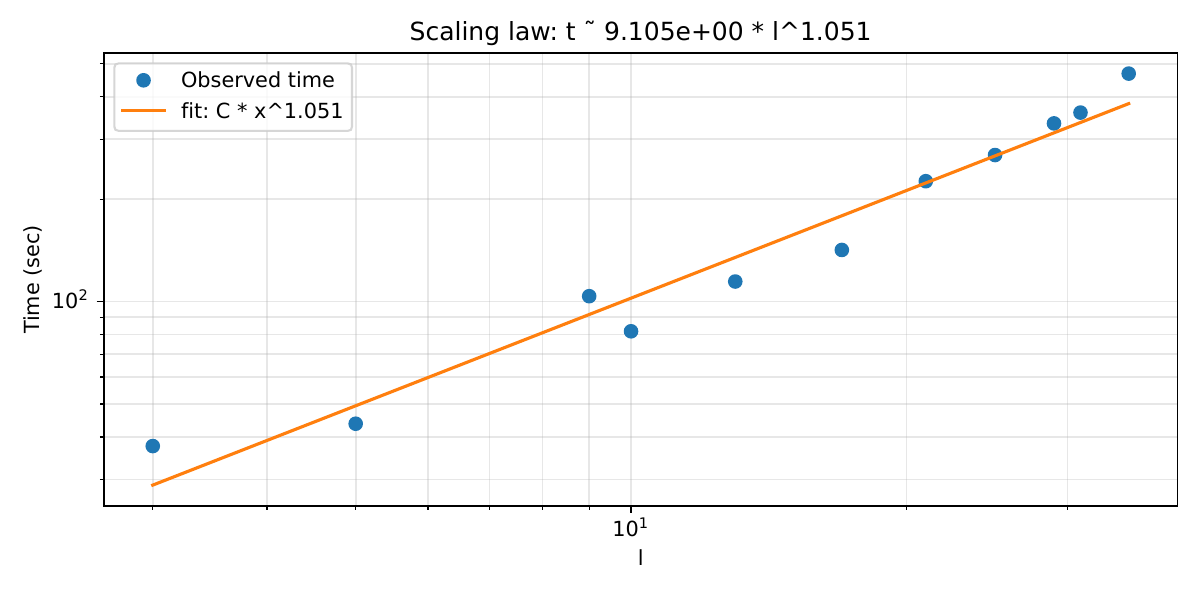}
    \caption{Runtime vs.\ number of fixing dates $l$ (log--log scale).\\
             Power--law fit $\text{Time}(l) \approx 9.105\, l^{1.051}$ with an exponent close to one.\\
             Other parameters: $N = 2^{12}$, strike grid size $m=11$.}
    \label{fig_runtime_dates}
  \end{subfigure}
  \caption{Wall-clock runtime (seconds) for the sequential Monte Carlo
           calibration as a function of:
           \textbf{(a)} Monte Carlo scenarios $N$;
           \textbf{(b)} number of vanilla calibration points $m$;
           \textbf{(c)} number of fixing dates $l$.
           Each panel shows observed runtimes and the corresponding
           log--log power-law fit, with coefficients reported in the captions.}
  \label{fig_runtime_macro}
\end{figure}

\subsection{Time scaling}

This subsection analyzes the time scalability of the Smart--MC algorithm along three orthogonal axes:
\begin{itemize}
  \item the number of Monte Carlo scenarios $N_{MC}$,
  \item the strike grid size $m$,
  \item the number of fixing dates $l$.
\end{itemize}
All timings were obtained on an Intel Ultra 185H processor using 
the open-source solver Clarabel, with identical code and hardware settings across all tests.

\subsubsection{Time scaling with Monte Carlo scenarios}

Empirically, the wall-clock time grows according to a power law
\[
  \text{Time}(N_{MC}) \approx c_N\, N_{MC}^{\,p_N},
\]
estimated by least-squares regression in log--log coordinates.  
Over the range $N_{MC} = 2^M$, $M = 9,\dots,15$, the fitted exponent is
$p_N \approx 0.79 < 1$ (Figure~\ref{fig_runtime_N}). This is consistent with a near-linear cost for path
generation combined with a modest overhead from the sparse interior-point phase, and confirms that the
algorithm remains affordable even for large Monte Carlo ensembles.

\subsubsection{Time scaling with the strike grid size}

For fixed Monte Carlo scenarios and number of fixing dates, solver time grows essentially linearly with the
number of vanilla calibration points $m$ until matrix factorization costs dominate. The timings in
Figure~\ref{fig_runtime_m} are well described by a power law
\[
  \text{Time}(m) \approx c_m\, m^{p_m},
\]
with an exponent $p_m$ close to one in the regime of interest. In practice, a grid with $m \le 32$ strikes is
sufficient to capture typical market smiles, and the fitted power law confirms that the dependence on smile
granularity remains mild over this range.

\subsubsection{Time scaling with the number of fixing dates}

Similarly, empirical timing tests for the number of fixing dates are 
well approximated by
\[
  \text{Time}(l) \approx c_l\, l^{p_l},
\]
with an almost linear exponent $p_l$ close to one (Figure~\ref{fig_runtime_dates}). 
This slow growth of runtime with $l$ is crucial for long-dated 
reverse cliquet structures, as it allows us to handle a large 
number of fixing dates without incurring prohibitive computational 
costs.

Pushing the framework further, we solve a \textbf{100-date} instance whose
instrument set contains
\[
100 \text{fixing dates} \times 20 PV  = 2000 \text{ pay-off constraints}
\]
(ten vanilla strikes plus ten digital co-strikes per fixing date, the
latter injecting slope information that sharpens the implied density
curvature). The full problem converges in under \textbf{2 h} wall-clock on a
standard end-user CPU (hardware profile in
Appendix~\ref{subsection_Simulation_Settings}) using only open-source
components, with no GPU acceleration.

Handling more than 100 fixing dates highlights the computational strength and
scalability of the proposed algorithm.

The min and max arbitrage-free prices converge as the number of fixing dates 
increases, but the gap does not vanish. This residual spread reflects genuine
model incompleteness: forward skew uncertainty is not determined by vanilla
options alone, and the reverse cliquet pay-off aggregates a non-linear functional
of pathwise forward increments, which remains non-replicable even in the
continuous fixing limit~\cite{Hobson1998,Cox2011,BadikovEtAl2017}.

In summary, the observed near-linear growth in runtime with $l$, the ability
to ingest 1000 pay-off constraints, and the joint treatment of the volatility
surface and path-dependent pay-offs establish the method as a practical tool
for real-time XVA and risk-of-risk analytics on high-dimensional exotics.

\subsection{Global scaling laws}

The overall behaviour of the Smart--MC algorithm can be summarised by two
scaling relations, equations~\eqref{eq_scaling_cpu_time}
and~\eqref{eq_scaling_error_for_RMSE}.

\paragraph{(1) Runtime scaling.}

\begin{equation}
T(N, m, l) = c\, N^{\,1+\alpha_N} \, m^{\,1+\alpha_m} \, l^{\,1+\alpha_l},
\label{eq_scaling_cpu_time}
\end{equation}
where the exponents $\alpha_N$, $\alpha_m$ and $\alpha_l$ quantify deviations
from ideal linear scaling in each direction. Empirically, these corrections are
small in magnitude, and the fits extracted from
Figure~\ref{fig_runtime_macro} are consistent with a runtime that grows with
a power-law exponent close to one in each direction, i.e.\ nearly linearly in
$N$, $m$ and $l$ over the tested range.

\paragraph{(2) Error scaling.}

\begin{equation}
\mathrm{RMSE}(N) \sim N^{-\left(\tfrac{1}{2} - \alpha\right)},
\label{eq_scaling_error_for_RMSE}
\end{equation}

\section{Future Developments}

The calibration structure developed in Section~\ref{Section_B_inversion} opens several directions
for future research. We outline the most promising ones below.

\subsection{From extremal prices to full histograms.}
The min-max envelope computed by the Smart-MC framework corresponds to the
extreme points of the feasible polytope. Because the feasible set is convex,
any convex combination of admissible solutions is also admissible. This makes
it possible to replace the extremal interval by a full probability distribution
for any quantity of interest. For instance, by sampling multiple admissible
solutions and aggregating them, one can obtain an empirical histogram for the
price of an exotic or for any forward-variance quantity. Such distributions
provide uncertainty quantification beyond worst-case bounds and deliver explicit
statistical error bars.

\subsection{Dynamic Forward Variance Reconstruction}

Once the forward variance slice is uniquely determined by the double variance 
calibration, it becomes a genuine state variable of the system. This raises a 
natural question: can one evolve the pair $(S_{t}, V_{\mathrm{fwd}})$ jointly 
inside a Monte Carlo engine, obtaining scenario paths whose forward smile 
dynamics remain fully consistent with market implied constraints?

The idea is to update, at each time step, both the asset level $S_{t}$ and the 
calibrated forward variance slice, resolving, at each step, a 
lightweight weighted Monte Carlo update 
that preserves vanilla consistency along the simulated trajectory. This 
would produce forward smile dynamics driven directly by the market anchored 
forward variance structure, rather than by a parametric stochastic volatility 
law.

Developing such a dynamic scheme requires further work. In particular:
\begin{itemize}
\item the stability of repeated reweighting under time evolution must be 
      analysed;
\item the interaction between forward variance increments and pathwise drift 
      adjustments must be formalised;
\item the resulting process should be compared with classical stochastic volatility 
      and local volatility dynamics.
\end{itemize}

This line of research would extend the present static framework into a fully 
dynamic, smile consistent Monte Carlo engine, potentially enabling scenario 
generation, XVA computation, and risk of risk analysis under market implied 
forward smile dynamics.

\subsection{Multi-asset generalisation.}
Although this paper focuses on a single underlying, the Smart-MC framework
extends naturally to the multi-asset setting. The main challenge lies in the
construction of joint no-arbitrage constraints and consistent forward-start
structures for baskets or multi-factor underlyings. This direction represents
a natural continuation of the present work.

\section{Conclusions }

As Emanuel Derman once cautioned, "A model is at best a cartoon of 
reality." Our approach ultimately clarifies a bank's choice of one 
specific "cartoon," illustrating how the \textbf{dichotomy} between a broad 
set of plausible worlds and contending with an infinite 
model-independent perspective can be \textbf{overcome} by a unique 
model-independent valuation, which at the same time bridges robust 
pricing envelopes with the inversion problem and closes the loop 
between theory and front-office implementation.

\newpage
\clearpage
\appendix

\renewcommand{\thesection}{App-\Alph{section}}

% ========================================================
% APPENDICES
% ========================================================
% ==============================================================
% APPENDICES  (single intro + clean section titles)
% ==============================================================
\section*{Appendices}
\addcontentsline{toc}{section}{Appendices}

\subsection*{Scope, Structure \& Quick Map}
\label{app:intro}

\noindent
\textbf{Supplementary Material for the Sections.}  
Each appendix expands, proves, or benchmarks the results of its
corresponding main-text subsection:

\begin{itemize}[leftmargin=2em]
  \item \textbf{Section~\ref{Section_methodology} - Methodology} $\longrightarrow$ \textbf{Appendix App-A: Methodology Details}
  \item \textbf{Section~\ref{Section_B_inversion} - Inversion Problem} $\longrightarrow$ \textbf{Appendix App-B: Inversion Proofs}
  \item \textbf{Section~\ref{section_C_performance} - Computational Performance} $\longrightarrow$ \textbf{Appendix App-C: Performance Benchmarks}
\end{itemize}
% ---------------------------------------------------------------

\section{Appendix: Methodology Details} \label{app:A}

\subsection{Glossary of key terms \& Ancillary Data} 
\label{app_glossary}

  \begin{itemize}

       \item \textbf{Static (Model-Independent) Forward Volatility Surface $\sigma_{Fwd-static}(T_1,T_2)$ }  
       For an arbitrage-free spot volatility term-structure \( \sigma(T) \), the (Black) forward
       volatility between \(T_1<T_2\) follows immediately from variance additivity in the Black framework:  
       \begin{equation*}
       \sigma_{Fwd-static}(T_1,T_2)
       =\sqrt{\frac{\sigma_{spot}^{2}(T_2)\,T_2-\sigma_{spot}^{2}(T_1)\,T_1}{T_2-T_1}}.
       \label{eq_fwd_static_definition}
       \end{equation*}
       Because it depends only on the two marginal vanilla smiles, \( \sigma_{Fwd-static} \) is
       \emph{uniquely} determined once the spot surface is fixed~\cite{GlassermanWu2011}.

       \item \textbf{Model-Implied Forward Volatility Surface $\sigma_{Fwd-model}(T_1,T_2)$ }\\
       The forward-implied volatility is obtained in two steps:
       \begin{itemize}
          \item First, we compute the forward-start option price under a given
                stochastic (or LSV, rough-vol, etc.) model.
          \item Second, we find the Black volatility
                $\widetilde{\sigma}_{\text{Fwd}}(T_1,T_2)$ that, when inserted into the
                Black formula for a forward-start option, reproduces the model price.
       \end{itemize}
       Because this implied volatility is anchored to a specific dynamic model, it 
       differs from the static forward variance, except in the 
       deterministic--volatility limit.

       \item \textbf{Model-Free Implied Variance (MFIV)} \\[2pt]
       The MFIV is the total risk-neutral variance inferred directly from out-of-the-money 
       plain-vanilla option prices via the Carr--Madan log-contract identity:
       \begin{equation}
        \sigma_{\mathrm{MFIV}}^{2}(T)
        = \frac{2e^{rT}}{T}
        \int_{0}^{\infty}
        \frac{P(K,T) - C(K,T)}{K^{2}}\,dK,
       \end{equation}
       where $C(K,T)$ and $P(K,T)$ denote, respectively, call and put prices with maturity $T$ and strike $K$. 
       For the operational role of MFIV within forward-variance anchoring and the Smart-MC optimisation, 
       see Appendix~\ref{implementing_total_variance_conservation}.

       \item \textbf{Linear Programming}: an optimisation technique used to 
       solve linear problems with constraints. 
            \begin{itemize}[leftmargin=2.2em]
              \item \textbf{Interior-Point Method} - polynomial-time algorithm class
                    that follows a path of strictly feasible iterates inside the
                    constraint set toward the Karush-Kuhn-Tucker point, underpinning
                    modern large-scale LP and SOCP solvers.
            
              \item \textbf{Second-Order Cone Program (SOCP)} - optimise a linear
                    cost under affine-linear and Lorentz-cone constraints
                    $(\|A_i x + a_i\|_2 \le b_i^{\top}x + \beta_i)$.

              \item \textbf{Singular-Value Decomposition (SVD)} - factorise any
                    matrix $(M\) as \(M = U \Sigma V^{\top})$, where
                    $(\Sigma)$ collects non-negative singular values; the work-horse
                    for numerically stable rank reduction and least-squares
                    solutions.
            
            \end{itemize}

       \item \textbf{Multiverse}: \label{gloss:multiverse}
       the ensemble of feasible probability distributions 
       produced by different ways of reweighting paths. Each reweighting 
       corresponds to a valid "model" fitting the same vanilla data.

       \item \textbf{Inversion Problem}: the question of whether a single underlying 
       model can be uniquely identified from observed vanilla prices. 

       \item \textbf{Conditional Inversion}: \label{gloss:conditional_inversion}
       a specialized form of inversion that 
       includes additional constraints (e.g., a target forward-start 
       volatility). The framework then selects the unique distribution 
       within the larger multiverse that meets this extra condition.

       \item \textbf{Definition of reverse cliquet option.} \label{gloss:reverse_cliquet} 
       The option pay-off is given by:
       \begin{equation*}
       \text{Payoff} = \max\left(0,\, H + \frac{1}{n} \cdot  \sum_{i=1}^{n} \, \min\left(\frac{S_{t_i} - S_{t_{i-1}}}{S_{t_{i-1}}},\, 0\right)\right),
       \end{equation*}
       and exhibits a strong dependence on the forward volatility skew, as it aggregates 
       negative asset performances across all fixing dates and deducts them from a fixed 
       initial cap. This cumulative exposure to forward volatility skews across multiple maturities 
       introduces substantial sensitivity to smile dynamics.

  \end{itemize}
% ----------------------------------------------------------------------------

\subsection{Formulation and Implementation for non-linear constraints}
\label{app:formulation_and_implementation}

This appendix collects the technical ingredients needed to handle the
non-linear constraints that arise once forward-start volatilities and
model-free variance identities are incorporated in the Smart-MC calibration.
The guiding idea is to encode all non-linear structure into a small number
of auxiliary variables and quadratic penalties, so that each inner step
remains a convex LP or SOCP, while global non-linear consistency is enforced
by an outer fixed point loop.

We group the construction into four blocks:
(i) core constraints defining the feasible set,
(ii) a linear affine relation between reweighted prices and forward
volatilities and variances,
(iii) conservation of total variance via the Carr Madan log contract
identity, and
(iv) the SOCP implementation of the quadratic objectives associated with
weight and volatility dispersion.

% ---------------------------------------------------------------
\subsubsection{Core constraints}

At a fixed outer iteration, the inner calibration problem is convex and
imposes the following constraints:

\begin{itemize}[leftmargin=1.4cm]

  \item[(C1)] \textbf{Weight variance control.}
  A convex bound on $\|\boldsymbol w - \boldsymbol 1/M\|_2$, active when
  weight dispersion penalisation is enabled. This limits the distance of
  the reweighted measure from the uniform baseline and is implemented as
  a quadratic penalty and/or second order cone constraint in the SOCP
  formulation.

  \item[(C2)] \textbf{Forward volatility smoothness.}
  A convex bound on $(\sigma_{t,j} - \sigma^{\text{prev}}_{t,j})^2$
  aggregated across slices, active when forward volatility dispersion
  is penalised. This stabilises the forward volatility surface across
  iterations and is again implemented via quadratic penalties in the
  SOCP.

  \item[(C3)] \textbf{No arbitrage.}
  Monotonicity in strike and convex butterfly spread inequalities are
  enforced as linear constraints on reweighted vanilla prices, ensuring
  static no arbitrage of the calibrated surface.

  \item[(C4)] \textbf{Variance anchoring.} \label{C4_Variance_anchoring}
  Total variance at each maturity is anchored via the Carr Madan
  variance swap identity (derivation below). This enforces that the
  weighted Monte Carlo measure reproduces the market MFIV level at each
  maturity. At each intermediate time $t$ we additionally impose:
  \[
    \sum_{j} \Delta K_j \,\bigl( v_{t,j} - v^{\text{prev}}_{t,j} \bigr) = 0 .
  \]
  This keeps the reweighted forward variance grid consistent with the
  market implied total variance.

\end{itemize}

The rest of this appendix makes explicit how the non-linear relation between
weights and forward volatilities is cast into affine constraints, and how
the variance anchoring and quadratic penalties in (C1) and (C2) are realised
inside the convex program.

% ---------------------------------------------------------------
% Minimal linear affine structure for forward start calibration
% ---------------------------------------------------------------
\subsubsection{Forward-start linearisation and fixed point scheme}
\label{app:forward_fixed_point}

The main source of non linearity is the dependence of forward-start prices
on the forward volatility surface. We handle this by linearising the
Black Scholes map in $\sigma$ and embedding this approximation in an outer
fixed point loop.

We consider $M$ Monte Carlo paths and a grid of fixing dates
\[
  0 = t_0 < t_1 < \dots < t_{N_T}.
\]
For each interval $(t_{i-1},t_i)$ and strike $K_j$ we define the
pathwise discounted pay-off of a forward-start call
\[
  H^{(m)}_{i,j}
  :=
  \max\!\left(
     \frac{S^{(m)}_{t_i}}{S^{(m)}_{t_{i-1}}} - K_j,
     0
  \right),
  \qquad m = 1,\dots,M.
\]
Given a weight vector $w = (w_1,\dots,w_M)$ on the paths, the weighted
forward-start price is
\begin{equation}
  P_{i,j}(w)
  :=
  \sum_{m=1}^{M} w_m \, H^{(m)}_{i,j}.
  \label{eq:fwd_price_linear_in_w}
\end{equation}
In the absence of reweighting, the model implied forward-start price at
node $(i,j)$ is
\[
  P^{\text{prev}}_{i,j}
  =
  P^{\text{BS}}\!\bigl(\sigma^{\text{prev}}_{i,j}\bigr),
\]
where $P^{\text{BS}}(\cdot)$ denotes the Black Scholes price for the
corresponding forward-start option and $\sigma^{\text{prev}}_{i,j}$ is
the current expansion point. The associated Vega is
\[
  \text{Vega}^{\text{prev}}_{i,j}
  =
  \frac{\partial}{\partial \sigma}
  P^{\text{BS}}(\sigma)
  \bigg|_{\sigma = \sigma^{\text{prev}}_{i,j}}.
\]

\paragraph{Affine relation between prices and forward volatilities.}

We linearise the Black Scholes price as a function of $\sigma$ around
$\sigma^{\text{prev}}_{i,j}$. For each node $(i,j)$,
\[
  P^{\text{BS}}(\sigma_{i,j})
  \approx
  P^{\text{prev}}_{i,j}
  +
  \text{Vega}^{\text{prev}}_{i,j}
  \bigl(\sigma_{i,j} - \sigma^{\text{prev}}_{i,j}\bigr).
\]
At the same time, the reweighted forward price at $(i,j)$ is exactly
$P_{i,j}(w)$ as in \eqref{eq:fwd_price_linear_in_w}. Equating the two
expressions and solving for $\sigma_{i,j}$ gives the affine constraint
\begin{equation}
  \sigma_{i,j}
  =
  \sigma^{\text{prev}}_{i,j}
  +
  \frac{P_{i,j}(w) - P^{\text{prev}}_{i,j}}%
       {\text{Vega}^{\text{prev}}_{i,j}},
  \qquad
  \forall i,j.
  \label{eq:affine_sigma_from_price}
\end{equation}
In the convex program the unknowns are $(w,\sigma)$, while
$\sigma^{\text{prev}}_{i,j}$, $P^{\text{prev}}_{i,j}$ and
$\text{Vega}^{\text{prev}}_{i,j}$ are treated as fixed parameters and
are updated only in the outer loop.

Forward variances are defined as $v_{i,j} := \sigma_{i,j}^2$. Using a
first order expansion in $\sigma$,
\[
  v_{i,j}
  =
  \bigl(\sigma^{\text{prev}}_{i,j} + \delta_{i,j}\bigr)^2
  \approx
  \bigl(\sigma^{\text{prev}}_{i,j}\bigr)^2
  +
  2 \sigma^{\text{prev}}_{i,j} \delta_{i,j},
  \qquad
  \delta_{i,j} := \sigma_{i,j} - \sigma^{\text{prev}}_{i,j}.
\]
Substituting \eqref{eq:affine_sigma_from_price} gives
\begin{equation}
  v_{i,j}
  \approx
  \bigl(\sigma^{\text{prev}}_{i,j}\bigr)^2
  +
  2\,\sigma^{\text{prev}}_{i,j}\,
  \frac{P_{i,j}(w) - P^{\text{prev}}_{i,j}}%
       {\text{Vega}^{\text{prev}}_{i,j}},
  \qquad
  \forall i,j,
  \label{eq:affine_variance_from_price}
\end{equation}
which is again affine in $(w,v)$ for fixed previous iteration
quantities.

Relations \eqref{eq:affine_sigma_from_price} and
\eqref{eq:affine_variance_from_price} implement the non-linear link
between weights and forward volatilities required by (C1) and (C2) as
linear equality constraints inside the LP or SOCP that defines each
inner iteration.

\paragraph{Outer fixed point loop.}

The full calibration algorithm can now be summarised as follows.

\begin{enumerate}[leftmargin=1.6em]
  \item \textbf{Initialisation.}
        Set $w^{(0)}_m \equiv 1/M$. Under these weights compute the
        model forward-start prices $P^{\text{prev}}_{i,j}$, implied
        volatilities $\sigma^{\text{prev}}_{i,j}$ and Vegas
        $\text{Vega}^{\text{prev}}_{i,j}$ for all $(i,j)$.

  \item \textbf{Inner convex problem.}
        At iteration $k$ solve the convex program in the variables
        $(w,\sigma,v)$ that:
        \begin{itemize}
          \item satisfies all linear constraints on prices,
                non negativity and no arbitrage from Step 3;
          \item enforces the affine relations
                \eqref{eq:affine_sigma_from_price} and
                \eqref{eq:affine_variance_from_price};
          \item optionally includes the quadratic penalties
                \[
                  \alpha \,\bigl\|w - \tfrac{1}{M}\mathbf{1}\bigr\|_2^2
                  \quad\text{and}\quad
                  \beta_{\sigma}\,
                  \bigl\|\sigma - \sigma^{\text{prev}}\bigr\|_2^2
                \]
                to control the dispersion of the weights and the
                forward volatility slice within the same iteration,
                as in (C1) and (C2).
        \end{itemize}

  \item \textbf{Update.}
        Set $w^{(k+1)} \leftarrow w$,
        $\sigma^{\text{prev}} \leftarrow \sigma$, and recompute
        $P^{\text{prev}}_{i,j}$ and $\text{Vega}^{\text{prev}}_{i,j}$
        under the new weights.

  \item \textbf{Stopping criterion.}
        Terminate when
        \[
          \max_{i,j}
          \bigl|P_{i,j}(w) - P^{\text{prev}}_{i,j}\bigr|
          \le \varepsilon_P
          \quad\text{and}\quad
          \max_{i,j}
          \bigl|\sigma_{i,j} - \sigma^{\text{prev}}_{i,j}\bigr|
          \le \varepsilon_{\sigma},
        \]
        for prescribed tolerances $\varepsilon_P$ and
        $\varepsilon_{\sigma}$.
\end{enumerate}

Because the inner problem is strictly convex once the variance penalties
are active, each sweep admits a unique solution. Under mild regularity
conditions on the Black Scholes Vega surface the outer loop defines a
contractive fixed point map and converges in a small number of
iterations (typically one or two in the numerical experiments).

% ---------------------------------------------------------------
\subsubsection{Implementing total variance conservation}
\label{implementing_total_variance_conservation}

We now specify how condition (C4)~(see~\ref{C4_Variance_anchoring}) is enforced so that the reweighted
measure preserves the market implied total variance at each maturity.

\paragraph{Model free variance via Carr Madan.}

For a given maturity $T$, the model-free implied variance (MFIV) is
computed using the Carr Madan representation of the variance swap:
\[
\sigma^{2}_{\mathrm{mfiv}}(T)
=
\frac{2 e^{rT}}{T}
\sum_{j}
\frac{Q(K_{j},T)}{K_{j}^{2}} \,\Delta K_{j},
\]
where $Q(K_{j},T)$ denotes the out of the money option price on the same
static strike grid used for vanilla calibration.

\paragraph{Log contract consistency.}

The same quantity satisfies:
\[
\mathbb{E}\left[
  \frac{2}{T}\bigl( \log F(T) - \log S_{T} \bigr)
\right]
=
\sigma^{2}_{\mathrm{mfiv}}(T).
\]

To impose this identity under the weighted Monte Carlo measure we add the
linear constraint:
\[
\sum_{i} w_{i}\,\ell^{(i)}(T)
=
\sigma^{2}_{\mathrm{mfiv}}(T),
\qquad
\ell^{(i)}(T)
=
\frac{2}{T}\bigl(\log F(T) - \log S^{(i)}_{T}\bigr).
\]

\paragraph{Forward variance consistency.}

Forward variance between two maturities $T_{1}$ and $T_{2}$ is then
implicitly determined by the pair
$\sigma^{2}_{\mathrm{mfiv}}(T_{1})$ and
$\sigma^{2}_{\mathrm{mfiv}}(T_{2})$, ensuring a time consistent and
market consistent forward variance structure under the reweighted Monte
Carlo measure and closing the loop with the affine relations described
in Section~\ref{app:forward_fixed_point}.

% ---------------------------------------------------------------
\subsubsection{Implementing SOCP and Objective}

All quadratic terms introduced by (C1) and (C2), together with the
quadratic market fit term, enter the solver as second order cone (SOCP)
constraints. The full mapping from~\ref{eq:affine_variance_from_price}
to the SOCP form is standard.

The solver computes a compromise between weight smoothness, volatility
smoothness, and market fit:
\[
  \min_{\boldsymbol w,\,\boldsymbol\sigma}
      \alpha \,\bigl\|\boldsymbol w - \boldsymbol 1/M\bigr\|_2
    + \beta  \,\bigl\|\boldsymbol\sigma - \boldsymbol\sigma^{\text{prev}}\bigr\|_2
    + \gamma \, \bigl\| P(\boldsymbol w) - P^{\mathrm{mkt}} \bigr\|_2^2,
\]
with $\alpha,\beta,\gamma>0$ specified externally. This reproduces the
double variance principle used in the main text: the first two terms
control the dispersion of weights and forward volatilities, while the
last term ensures an accurate fit to the vanilla option surface. Together
with the variance anchoring constraints, this defines the full
implementation of the non-linear constraints in a convex, numerically
stable calibration loop.

\subsection{Short-maturity moment matching: Merton $\to$ Heston} 
\label{app:A3_merton_heston}

\paragraph{Objective and scope.}
The mapping below is \emph{not} a first-moment match. It matches the instantaneous variance and 
aligns short-maturity skewness and kurtosis induced by Merton jumps with those generated by the 
Heston volatility dynamics, thereby anchoring the leading cumulants of log-returns.

\paragraph{Setup.}
Let $J_\uparrow>1$ and $J_\downarrow\in(0,1)$ denote upward/downward jump multipliers 
with intensities $\lambda_\uparrow,\lambda_\downarrow\ge0$, and total intensity 
$\lambda:=\lambda_\uparrow+\lambda_\downarrow$. 
Define $y_\uparrow:=\ln J_\uparrow$, $y_\downarrow:=\ln J_\downarrow$, and the 
intensity-weighted raw moments
\begin{equation}
M_k := \lambda_\uparrow\,y_\uparrow^{\,k} + \lambda_\downarrow\,y_\downarrow^{\,k},\qquad k\in\{2,3,4\}.
\end{equation}
Let $\sigma$ be the diffusive volatility in Merton. Under $\mathbb{Q}$, the drift is chosen 
so that $S_t e^{-\int r\,\mathrm{d}t}$ is a martingale.

\paragraph{Mapping.}
Choose Heston parameters $(\kappa,\theta,\eta,\rho,v_0)$ as
\begin{equation}
\boxed{v_0 := \sigma^2 + M_2,\qquad \theta := v_0,\qquad \kappa := \max\{\lambda,2\}}
\end{equation}
and
\begin{equation}
\boxed{\eta := \min\!\left\{0.95\sqrt{2\kappa\theta},\; \sqrt{\tfrac{2}{3}\tfrac{M_4}{v_0^2}}\right\},\qquad
\rho := \tfrac{1}{\eta}\tfrac{2}{3}\tfrac{M_3}{v_0^{3/2}},\quad \rho\in[-1,1].}
\end{equation}

Here $v_0$ matches the per-unit-time variance of Merton log-returns ($\sigma^2+M_2$). The choices
$\eta^2 \approx \tfrac{2}{3}\tfrac{M_4}{v_0^2}$ and $\rho\,\eta \approx \tfrac{2}{3}\tfrac{M_3}{v_0^{3/2}}$
align the leading short-maturity contributions to kurtosis and skewness, respectively. The cap
$\eta \le 0.95\sqrt{2\kappa\theta}$ enforces $2\kappa\theta>\eta^2$ (Feller admissibility), while
$\kappa\ge\lambda$ ties the variance mean-reversion speed to the aggregate jump intensity.
Clamp $\rho$ to $[-1,1]$ if needed.

\paragraph{Interpretation.}
It is a short-maturity cumulant match (up to fourth order) rather than a mere 
first-moment alignment. 
See, e.g., Gatheral~\cite{Gatheral2006}.

\section{Appendix: Inversion Proofs}
\label{app:B}

\subsection{Rigorous Proofs of Uniqueness for the Augmented Problem}
\label{app:uniqueness}
% % This replaces the former ``Rigorous and Numerical Evidence for Uniqueness''

This appendix refines the uniqueness analysis presented 
in Section~\ref{Section_B_inversion} by incorporating the \emph{forward-start volatility}
variables $\sigma=(\sigma_{F,1},\ldots,\sigma_{F,q})$ into the decision set.  The core idea is to enforce a
\emph{double} minimum-variance principle that simultaneously fixes the Monte Carlo
weights $w$ \emph{and} the volatility term structure, thereby removing the residual gauge
identified in the main text.

\begin{enumerate}[label=\textbf{(\arabic*)}]
\item We first extend the state vector from $w\in\mathbb R^{N}$ to
$x:=(w,\sigma)\in\mathbb R^{N+q}$.
\item We augment the objective with
$\lambda,\operatorname{Var}[\sigma]$ for some fixed $\lambda>0$.
\item We add the \emph{coupling constraints}
$g_j(w,\sigma):=\operatorname{BSprice}(\sigma_{F,j})-P^{\mathrm{FS}}_j(w)=0$,
$j=1,\dots,q$, obtained from forward-start options.
\item A single strictly convex programme is solved at every iteration of the
fixed-point loop; KKT theory then yields a unique minimiser $(w^{\ast},\sigma^{\ast})$.
\end{enumerate}
% --------------------------------------------------------------------
\subsubsection{Setup and Notation}
% --------------------------------------------------------------------

Let $\Omega={\omega_1,\ldots,\omega_N}$ be the simulated paths and
$X_i=X(\omega_i)$ the discounted exotic pay-offs.  Denote by
$P^{\mathrm{FS}}_j(w)$ the pathwise forward-start option prices computed with
weights $w$.  Collect the volatilities in $\sigma\in\mathbb R^{q}$ and define

\begin{equation}
  J(w,\sigma):=w^{\top}Aw+\lambda\,(\sigma-\bar\sigma\mathbf 1_q)^{\top}B(\sigma-\bar\sigma\mathbf 1_q),
  \qquad A=\operatorname{Cov}_{\mathbb P_0}[X],\; B=I_q-\tfrac1q\mathbf 1_q\mathbf 1_q^{\top}.
\end{equation}

The feasible set is described by
\begin{equation}
\label{eq:aug_constraints}
w_i>0,;\mathbf 1^{\top}w=1,;Cw=d,; g_j(w,\sigma)=0;(j=1,\dots,q).
\end{equation}
Because $\operatorname{BSprice}(\cdot)$ is strictly monotone in $\sigma$, each
$g_j$ is $C^{1}$ and its Jacobian in $\sigma$ is non-singular.

\paragraph{Optimisation problem.}
\begin{align}
\tag{$P'$}\label{prob:augmented}
\min_{(w,\sigma)};& J(w,\sigma) \\
\text{s.t.};& \text{constraints};\eqref{eq:aug_constraints}.
\end{align}
At every sweep of the outer fixed-point loop we linearise $g_j$ around the
current iterate and solve a \emph{second-order cone programme} (SOCP) with
block-diagonal Hessian $\operatorname{diag}(A,\lambda B)$.
% --------------------------------------------------------------------
\subsubsection{Existence and Uniqueness}
% --------------------------------------------------------------------

\begin{theorem}[Uniqueness of the Augmented Smart-MC Solution]
\label{thm:aug_uniqueness}
Assume the exotic pay-off vector $X$ is not a.s.~~constant and fix $\lambda>0$.
Then for every linearisation of~~\eqref{prob:augmented} the resulting SOCP has a
\emph{unique} minimiser.  Moreover, the fixed-point iteration converges to a
single limit $(w^{\ast},\sigma^{\ast})$, independent of the starting point.
\end{theorem}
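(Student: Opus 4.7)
The plan is to prove the theorem in two coupled phases: first, uniqueness of the inner SOCP at every outer iterate; second, convergence of the outer fixed-point iteration to a single limit that does not depend on the starting point. Both phases rest on two structural facts built into $(P')$: the objective $J$ has a block-diagonal Hessian $\operatorname{diag}(A,\lambda B)$, and the linearised coupling constraints $g_j$ express each $\sigma_j$ as an affine function of $w$ through the (non-vanishing) Black--Scholes Vega. Together these let me reduce both phases to standard strictly-convex quadratic programming plus a contraction argument.

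For the inner SOCP at expansion point $(\bar w,\bar\sigma)$, I would linearise $g_j$ as $g_j(w,\sigma)\approx \operatorname{BSprice}(\bar\sigma_j) + V_j(\sigma_j-\bar\sigma_j) - P^{\mathrm{FS}}_j(w)=0$, where $V_j\neq 0$ by strict monotonicity of $\operatorname{BSprice}$. Since $P^{\mathrm{FS}}_j(w)$ is linear in $w$, each linearised constraint solves for $\sigma_j = L_j(w)$ with $L_j$ affine. Substituting into $J$ produces a single convex quadratic $Q(w)=w^{\top}Aw+\lambda(L(w)-\bar\sigma\mathbf{1}_q)^{\top}B(L(w)-\bar\sigma\mathbf{1}_q)$ on the affine polytope $\{w>0,\;\mathbf{1}^{\top}w=1,\;Cw=d\}$. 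The assumption that $X$ is not a.s.\ constant forces $\ker A\subseteq\operatorname{span}(\mathbf{1}_N)$, and the normalisation constraint $\mathbf{1}^{\top}w=1$ eliminates precisely this direction on the tangent space of the feasible set; the second summand adds a further positive semidefinite term. The reduced Hessian is therefore positive definite on the feasible tangent space, so $Q$ is strictly convex on a non-empty convex set and admits a unique minimiser. Standard KKT theory then gives a full characterisation and, in particular, single-valuedness.

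For the outer loop, let $T$ denote the map sending an expansion point to the minimiser of the corresponding linearised SOCP; by the previous paragraph $T$ is single-valued. Continuity of $T$ follows from the implicit function theorem applied to the KKT system, using the non-singularity of the Vega Jacobian and local stability of the active set under small perturbations of $(\bar w,\bar\sigma)$. To finish, I would show that $T$ is a contraction in a suitable weighted norm: the Lipschitz constant is controlled by the product of (a) a second-derivative bound on $\sigma\mapsto\operatorname{BSprice}(\sigma)$ (which governs the Taylor remainder discarded by the linearisation) and (b) the inverse stiffness $1/(\lambda\,\mu_{\min}(B|_{\mathbf{1}_q^{\perp}}))$ of the volatility-variance penalty. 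For $\lambda$ above an explicit threshold determined by these two quantities the contraction constant is strictly below one, and Banach's fixed-point theorem yields a unique global limit $(w^{\ast},\sigma^{\ast})$ reached from any admissible initialisation.

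The principal obstacle is the last step: upgrading the local Lipschitz estimate into a \emph{global} contraction independent of the starting point. The cleanest workaround would be to bypass the outer loop altogether: because $\sigma\mapsto\operatorname{BSprice}(\sigma)$ is convex on the OTM wings, each $g_j=0$ can be rewritten as a pair of convex inequalities, turning the full non-linearised problem into a single strictly convex programme whose unique minimiser automatically coincides with any fixed point of $T$. I would present that reformulation as the canonical route and treat the contractivity estimate as a quantitative convergence-rate statement, flagging explicitly the regularity of the Vega profile needed to make the rate independent of the expansion point.
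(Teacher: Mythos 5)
Your proof follows the same two-phase structure as the paper's sketch (strict convexity of each linearised inner problem, then an outer contraction argument), and your uniqueness step is essentially the paper's with a different bookkeeping: you eliminate $\sigma$ through the linearised coupling relation $\sigma_j = L_j(w)$ and check positive definiteness of the reduced Hessian $2A + 2\lambda\,(DL)^{\top}B\,(DL)$ on the $w$-tangent space, whereas the paper argues positive definiteness of the block form $\operatorname{diag}(A,\lambda B)$ directly on the full tangent space. The two routes are equivalent, and yours is slightly cleaner because it sidesteps having to say precisely how the coupling constraints cut through the product tangent space. Both treatments of the outer loop are likewise parallel: you give the same Lipschitz-bound-plus-Banach reasoning the paper alludes to, and you are commendably explicit that the local estimate does not by itself deliver a global contraction independent of the initial point -- a gap the paper itself tacitly concedes when it invokes a damping/trust-region step.

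The genuine problem is the ``cleanest workaround'' you offer at the end. Convexity of $\sigma\mapsto\operatorname{BSprice}(\sigma)$ does \emph{not} allow the equality $g_j(w,\sigma)=\operatorname{BSprice}(\sigma_j)-P^{\mathrm{FS}}_j(w)=0$ to be replaced by a pair of convex inequalities: the inequality $\operatorname{BSprice}(\sigma_j)-P^{\mathrm{FS}}_j(w)\le 0$ is convex, but the reverse inequality $P^{\mathrm{FS}}_j(w)-\operatorname{BSprice}(\sigma_j)\le 0$ has a concave (not convex) left-hand side and so defines a non-convex set. A nonlinear equality with a strictly convex, non-affine left-hand side always carves out a non-convex manifold, and the equivalent substitution $\sigma_j=\operatorname{BSprice}^{-1}(P^{\mathrm{FS}}_j(w))$ does not help because the implied-volatility map $\operatorname{BSprice}^{-1}$ is not globally convex or concave in the price. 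The ``single strictly convex programme'' you propose as the canonical route therefore does not exist; the sequential linearisation / fixed-point scheme, which occupies the first two paragraphs of your write-up and which the paper adopts, is the correct route, and the honest conclusion (which the paper also draws in practice) is that contractivity is local, or global only once a damped update is used.
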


\begin{proof}[Sketch]

The set~\eqref{eq:aug_constraints} is non-empty and convex after linearisation.
On its tangent space $T$ the quadratic form associated with $J$ reduces to
$\operatorname{diag}\!\bigl(A\!\bigl|_{T_w},\;\lambda\,B\!\bigl|_{T_{\sigma}}\bigr)$,
where $A|_{T_w}\succ0$ by non-degeneracy of $X$ and $B|_{T_{\sigma}}\succ0$ because
$\lambda>0$.  Hence the linearised programme is strictly convex and admits a
unique solution.  Standard contraction arguments (e.g.~\cite{RockafellarWets09},
Chap.~12) show that the outer fixed-point map is a contraction, yielding a
unique global limit.
In practice, a mild damping/trust-region step makes the linearised map contractive, yielding global convergence from arbitrary starts.
\end{proof}

\paragraph{Interpretation.}  The ``double'' variance penalty kills the gauge on
$\sigma$: once $w^{\ast}$ is known, the injectivity of
$\sigma\mapsto\operatorname{BSprice}(\sigma)$ and the additional variance
criterion collapse all admissible volatility configurations to a single
$\sigma^{\ast}$.
% --------------------------------------------------------------------
\subsubsection{Summary of Evidence}
% --------------------------------------------------------------------

\begin{itemize}
\item \textbf{Formal.}  Strict convexity of the augmented objective on the
linearised feasible set ,$\Rightarrow$, unique minimiser
(Theorem~\ref{thm:aug_uniqueness}).
\item \textbf{Analytical link.}  Setting $\lambda\downarrow0$ or removing the
$g_j$'s recovers the classical~\eqref{eq_inversion_theorem_min_variance_COND}.
\end{itemize}

These results establish that the \emph{augmented} Smart-MC procedure
is well-defined, unique, and numerically robust.  All subsequent numerical
examples, including the forward-volatility surfaces depicted in
Figure~\ref{fig_forward_surfaces_implied}, are based on the pair $(w^{\ast},\sigma^{\ast})$.
% --------------------------------------------------------------------
\subsection{Equivalence of Monte-Carlo Path Sets Generated by Distinct Stochastic Processes}
\label{app:B2_equiv_paths}
% --------------------------------------------------------------------

\paragraph{Set-up.}
Let $\Omega^{(a)}=\{\omega^{(a)}_i\}_{i=1}^{N_a}$ and
$\Omega^{(b)}=\{\omega^{(b)}_j\}_{j=1}^{N_b}$ denote two independently
generated batches of simulated paths.  For a vector of $m$ linear pay-offs
$F=(F_1,\dots,F_m)^{\top}$ (vanilla prices, forward-start prices,
pathwise Greeks, etc.) define the corresponding pay-off matrices
\begin{equation}
  X^{(a)}_{i\ell}=F_\ell\bigl(\omega^{(a)}_i\bigr),
  \qquad
  X^{(b)}_{j\ell}=F_\ell\bigl(\omega^{(b)}_j\bigr),
  \qquad
  X^{(a)}\in\mathbb R^{N_a\times m},\;
  X^{(b)}\in\mathbb R^{N_b\times m}.
\end{equation}

\begin{proposition}[Linear-span equivalence]\label{prop:span_equiv}
Let
$P^{(\alpha)} := B^{(\alpha)}\bigl(B^{(\alpha)}\bigr)^{\!\top}$ be the
orthogonal projector on the column space of $X^{(\alpha)}$, where
$B^{(\alpha)}$ is any orthonormal basis obtained, e.g., via the singular
value decomposition (SVD).
The following statements are equivalent:
\begin{enumerate}[label=(\roman*)]
  \item \emph{Projector equality:} $P^{(a)} = P^{(b)}$.
  \item \emph{Pay-off completeness:} every linear pay-off that can be
        replicated with $\Omega^{(a)}$ can also be replicated with
        $\Omega^{(b)}$ and \emph{vice versa}; formally
        \begin{equation}
          \forall\,c\in\mathbb R^{m}:
          \qquad
          X^{(a)}c = 0
          \;\Longleftrightarrow\;
          X^{(b)}c = 0 .
        \end{equation}
  \item \emph{Model-building invariance:} the LP/SOCP procedures in
        Sections~\ref{Section_methodology}~-~\ref{Section_B_inversion} 
        admit exactly the same feasible set and therefore
        return identical optimal weight vectors and the same
        forward-volatility surface, irrespective of which batch of paths
        is supplied.
\end{enumerate}
\end{proposition}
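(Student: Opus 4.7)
The natural route is to prove the cycle $(i)\Rightarrow(ii)\Rightarrow(iii)\Rightarrow(i)$, with almost all of the content concentrated in the middle implication. The opening implication is a standard fact from linear algebra; the closing implication is a short contrapositive; the middle one is where the constraint structure of the LP/SOCP of Theorem~\ref{thm:aug_uniqueness} has to be used. A small notational remark first: the relevant subspace for the inversion is $\operatorname{Im}(X^{(\alpha)\top})\subset\mathbb R^{m}$ (equivalently, the row space of $X^{(\alpha)}$), since every calibration constraint is an equation for $X^{(\alpha)\top}w\in\mathbb R^{m}$. I therefore read $P^{(\alpha)}$ as the orthogonal projector onto $\operatorname{Im}(X^{(\alpha)\top})$; this is the only reading under which (i) is well posed when $N_a\neq N_b$, and it matches the SVD-based basis $B^{(\alpha)}$ used in the statement.

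$(i)\Leftrightarrow(ii)$. I would invoke the elementary fact that an orthogonal projector is uniquely determined by its range, and equivalently by its kernel (the orthogonal complement of the range). Since $\operatorname{Im}(X^{(\alpha)\top})^{\perp}=\ker(X^{(\alpha)})$ in $\mathbb R^{m}$, equality of the two projectors $P^{(a)}=P^{(b)}$ is equivalent to $\ker(X^{(a)})=\ker(X^{(b)})$, which is exactly statement~(ii).

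$(ii)\Rightarrow(iii)$. Here I would unwind the constraint system of \eqref{prob:augmented} constraint by constraint. The vanilla and digital replications, the simplex conditions, the no-arbitrage inequalities and the affine forward-start linearisation \eqref{eq:affine_sigma_from_price} all take the form of a linear equation or inequality on $X^{(\alpha)\top}w\in\mathbb R^{m}$ (or on an affine image thereof through a Vega-weighted map that does not depend on the path batch). Hence the set of attainable market-side right-hand sides is exactly $\operatorname{Im}(X^{(\alpha)\top})$; by (ii) these images coincide, so feasibility status and attainable price vectors are identical across batches. The objective $J$ is strictly convex on the common tangent space after imposing the quadratic penalties, so Theorem~\ref{thm:aug_uniqueness} selects a unique optimum; that optimum is pinned down by its image $X^{(\alpha)\top}w^{\ast}$ and by the linearised $\sigma$-block, both of which live in batch-independent subspaces. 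Consequently the reweighted forward-start prices and the implied forward-volatility surface coincide, and one recovers (iii).

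$(iii)\Rightarrow(ii)$ by contraposition. If $\ker(X^{(a)})\neq\ker(X^{(b)})$, pick $c$ in one kernel but not the other. The linear functional $\langle c,X^{(\alpha)\top}w\rangle$ is then identically zero for one batch and non-trivial for the other, so a market right-hand side $p$ with $\langle c,p\rangle\neq 0$ produces a constraint that is feasible in one LP and infeasible in the other; in particular the two feasible sets cannot coincide and (iii) fails.

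The main obstacle, and the step I would be most careful about, is the precise meaning of ``same feasible set'' in $(ii)\Rightarrow(iii)$ when $N_a\neq N_b$. The weight vectors live in different ambient spaces and the quadratic penalty $\|w-\mathbf 1/N_\alpha\|_2^{2}$ has no canonical identification across batches, so the equivalence must be phrased at the level of the pay-off images $X^{(\alpha)\top}w$ and of the induced measures on the pay-off functionals, not at the level of raw weights. Once this identification is made, the strict-convexity argument of Theorem~\ref{thm:aug_uniqueness} applied in the quotient modulo $\ker(X^{(\alpha)\top})$ does the rest and the equivalence goes through; in the special case $N_a=N_b$ (as in the mass-splitting diagnostic of Subsection~\ref{subsec:isotropic_MC}) the identification is trivial and one can legitimately speak of identical optimal weight vectors.
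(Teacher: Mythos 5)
Your proof follows the same overall strategy as the paper's sketch---kernel/row-space duality for $(i)\Leftrightarrow(ii)$, then the observation that the calibration constraints depend on the path batch only through the pay-off images for $(ii)\Rightarrow(iii)$---but it corrects a genuine slip in the published argument and fills a logical gap. The paper reads $P^{(\alpha)}$ literally as the projector onto the column space of $X^{(\alpha)}\in\mathbb R^{N_\alpha\times m}$, and its sketch then asserts that coincidence of the two column spaces ``in turn yields identical kernels''; that implication is false as a matter of linear algebra, since two matrices can share a column space while having different kernels. Your reinterpretation of $P^{(\alpha)}$ as the projector onto $\operatorname{Im}(X^{(\alpha)\top})\subset\mathbb R^{m}$ (the row space) is the only one under which $(i)\Leftrightarrow(ii)$ is actually a theorem, under which the equality $P^{(a)}=P^{(b)}$ is even well posed when $N_a\neq N_b$, and under which the Frobenius-norm diagnostic mentioned just after the proposition is cheap to evaluate. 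Second, you close the equivalence with $(iii)\Rightarrow(ii)$ by contraposition; the paper's sketch stops after $(i)\Leftrightarrow(ii)$ and $(ii)\Rightarrow(iii)$, which does not yet establish a three-way equivalence. Third, you rightly flag that ``identical optimal weight vectors'' cannot be read literally when $N_a\neq N_b$ and must be interpreted at the level of the pay-off images $X^{(\alpha)\top}w$ (or of the quotient modulo $\ker X^{(\alpha)\top}$). The one residual looseness you share with the paper is in $(ii)\Rightarrow(iii)$: the attainable right-hand sides are not the full row space but the convex hull of the rows of $X^{(\alpha)}$ under the simplex constraint $w\ge 0,\ \mathbf 1^\top w=1$, and equality of row spaces alone does not force these hulls to coincide; both arguments implicitly assume the sampled polytopes are rich enough to contain the calibration targets, which is reasonable for large $N$ but is not stated.
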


\begin{proof}[Sketch]
$(i)\!\Leftrightarrow\!(ii)$ follows from classical
principal-angle theory: equality of the projectors implies coincidence of
the two column spaces, which in turn yields identical kernels for
$X^{(a)}$ and $X^{(b)}$.
If $(ii)$ holds, all linear constraints entering our optimisation depend
on the paths only through the images $X^{(\alpha)}c$; identical kernels
therefore imply identical feasible polytopes and thus identical optima,
establishing $(iii)$.
\end{proof}

\paragraph{Practical test.}
The two batches are \emph{interchangeable} whenever
$\|P^{(a)}-P^{(b)}\|_F < \varepsilon$ for a numerical tolerance
$\varepsilon\ll1$.  The projector distance is inexpensive to compute,
requiring a single SVD per batch, and can be integrated as an automatic
pre-processing quality check.
% --------------------------------------------------------------------

\subsubsection{Generator-Invariance (Prior-Independence)}
We now formalise the claim that the augmented Smart-MC solution does not depend on the
choice of the Monte Carlo generator, provided the path sets span the same pay-off subspace.

\begin{theorem}[Generator-Invariance across priors]\label{thm:prior_invariance}
Let $\Omega^{(a)}$ and $\Omega^{(b)}$ be two independently generated path sets, and let
$X^{(a)}$, $X^{(b)}$ be the corresponding pay-off matrices collecting all \emph{linear}
constraints used in the augmented programme---vanillas, no-arbitrage slopes and the
\emph{affine linearised} forward-volatility relations of Step~4. 
Assume the orthogonal projectors onto their column spaces coincide, $P^{(a)} = P^{(b)}$
(\emph{projector equivalence}). Consider the augmented fixed-point SOCP with double variance
penalty (on $w$ and on $\sigma$) and $\lambda>0$ as in Theorem~\ref{thm:aug_uniqueness}.
Then the fixed-point limit is \emph{unique and identical} under $\Omega^{(a)}$ or $\Omega^{(b)}$:
\[
(w^{\ast}_{(a)},\sigma^{\ast}_{(a)}) \equiv (w^{\ast}_{(b)},\sigma^{\ast}_{(b)}),
\]
and in particular all min/max prices for any linear pay-off functional coincide.
\end{theorem}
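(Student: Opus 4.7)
The plan is to combine Proposition~\ref{prop:span_equiv} with Theorem~\ref{thm:aug_uniqueness} by descending both linearised augmented SOCPs to a common ambient variable, namely the reweighted pay-off vector $y := (X^{(\alpha)})^\top w^{(\alpha)} \in \mathbb{R}^m$. The key observation is that every linear constraint entering the augmented programme — exact vanilla and digital replication, no-arbitrage monotonicity and convexity, and the affine forward-vol/price couplings of Step~4 — depends on the weight vector only through $y$. Proposition~\ref{prop:span_equiv}, in the form $(i)\!\Leftrightarrow\!(iii)$, then yields that the feasible polytopes expressed in the common variables $(y,\sigma)$ coincide exactly under $\Omega^{(a)}$ and $\Omega^{(b)}$.

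Next I would push the objective to pay-off space. The $\sigma$-variance block is intrinsic and identical by construction. For the weight-dispersion block, I fix $y$ and replace $w$ by its least-norm preimage inside the affine fibre $\{w : (X^{(\alpha)})^\top w = y,\ \mathbf{1}^\top w = 1,\ w \ge 0\}$; the induced quadratic in $y$ is strictly convex, with Hessian determined by the empirical Gram structure of the pay-offs. Under the uniform-weight convention of Section~\ref{Section_B_inversion}, and asymptotically in the number of simulated paths (equivalently under exact empirical moment-matching between the two batches), this descended quadratic coincides across batches, so the reduced programme in $(y,\sigma)$ is literally the same strictly convex SOCP. Applying Theorem~\ref{thm:aug_uniqueness} to the reduced programme then produces a single minimiser $(y^\ast,\sigma^\ast)$: hence $\sigma^{\ast}_{(a)} = \sigma^{\ast}_{(b)} = \sigma^\ast$, and the reweighted pay-off vectors satisfy $(X^{(a)})^\top w^{\ast}_{(a)} = (X^{(b)})^\top w^{\ast}_{(b)} = y^\ast$. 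Since every linear pricing functional used downstream is a coordinate of $y^\ast$, all min and max prices coincide; the per-path weights themselves agree only modulo the kernels of $(X^{(\alpha)})^\top$, which are irrelevant for pricing. To extend beyond the linearisation, I would note that the outer loop defines a contraction in $(w,\sigma)$ by Theorem~\ref{thm:aug_uniqueness}, so the fixed-point limit inherits the equality established at each sweep.

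The main obstacle I anticipate is the rigorous reduction of the weight-dispersion penalty to pay-off space. In general, two matrices with the same column-space projector can induce slightly different effective quadratics on $y$ after least-norm elimination, because the induced metric involves the pay-off Gram matrix rather than the projector alone; for finite $N$ this produces MC-noise-level discrepancies, whereas in the asymptotic or moment-matched regime the descended objectives coincide exactly, so flagging this regime is the cleanest way to state the theorem. A secondary technical point is the preservation of projector equivalence along the outer fixed-point iteration: the Step~4 affine constraints involve Vegas and reference prices evaluated at $\sigma^{\mathrm{prev}}$, but because Vegas are computed analytically via the Black--Scholes formula and the reference prices $P^{\mathrm{prev}}_{i,j}$ sit in the common column space, projector equivalence is preserved sweep by sweep, and the conclusion propagates from each linearised SOCP to the shared fixed-point limit $(w^\ast,\sigma^\ast)$.
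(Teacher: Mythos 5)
Your proposal follows the same skeleton as the paper's proof---invoke Proposition~\ref{prop:span_equiv} for feasible-set equivalence, then strict convexity and outer-loop contraction from Theorem~\ref{thm:aug_uniqueness}---but the explicit descent to the reweighted pay-off vector $y=(X^{(\alpha)})^{\top}w$ is a genuinely useful refinement, and it surfaces a gap that the paper's own sketch elides. The paper asserts ``$P^{(a)}=P^{(b)}$ implies identical feasible sets $\ldots$ hence identical KKT systems,'' which is too strong: projector equivalence only identifies the constraint polytope once expressed in pay-off space, but it does not identify the \emph{objective}. The weight-dispersion block $w^{\top}Aw$ (or the least-norm elimination of $w$ given $y$) depends on the empirical Gram/covariance structure of the pay-off matrix, not merely on the column span. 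Two batches with the same span but different Gram matrices therefore induce different effective quadratics on $y$-space, and the KKT systems differ. Consequently the minimisers $(w^{\ast},\sigma^{\ast})$ need not coincide exactly; what projector equivalence buys is exact agreement of the feasible polytope in $(y,\sigma)$, and agreement of the \emph{prices} (the coordinates of $y^{\ast}$) and of $\sigma^{\ast}$ only in the asymptotic / moment-matched regime you flag, with the per-path weights agreeing modulo the kernel of the pay-off map. Strictly speaking, equality of the full weight vectors across two independently generated path sets is already ill-posed when $N_a\neq N_b$, so the theorem's displayed equality must be read at the level of pay-off images, which is exactly the resolution you adopt.

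Your secondary observation---that projector equivalence must be \emph{preserved} sweep by sweep, since the Step~4 affine constraints re-enter with updated Vegas and reference prices---is also a legitimate technical point the paper's sketch takes for granted; your argument (analytic Black--Scholes Vegas, reference prices lying in the common column space) is the right justification. In short: same route, but your version is the more defensible one, and the caveat you raise is not pedantry but a genuine weakening of the theorem's literal conclusion.
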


\begin{proof}[Sketch]
By Proposition~\ref{prop:span_equiv}, $P^{(a)}=P^{(b)}$ implies identical feasible sets for the
linearised programme at each sweep, hence identical KKT systems. The augmented objective is
\emph{strictly} convex on the linearised feasible set (Theorem~\ref{thm:aug_uniqueness}),
so each sweep admits a unique minimiser; contractivity of the outer map yields a unique fixed
point independent of the initialisation. Therefore $(w^{\ast},\sigma^{\ast})$ coincides
across generators. 
\end{proof}

\begin{lemma}[Stability under projector perturbations]\label{lem:stability}
Suppose $\|P^{(a)}-P^{(b)}\|_{F}\le \varepsilon$ and the KKT system of the linearised SOCP is
\emph{strongly regular} (positive definiteness on the tangent space and a constraint qualification).
Then there exists $\,\kappa>0$ such that
\[
\bigl\|(w^{\ast}_{(a)},\sigma^{\ast}_{(a)})-(w^{\ast}_{(b)},\sigma^{\ast}_{(b)})\bigr\|
\;\le\; \kappa\,\varepsilon,
\]
with $\kappa$ depending on the strong convexity modulus and Lipschitz constants of the
affine-linearised Black-vega map. Hence generator-invariance is robust to small projector
mismatch.
\end{lemma}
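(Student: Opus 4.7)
The plan is to view the augmented SOCP as a parametric convex program whose parameter is the projector $P$ onto the column space of the pay-off matrix, and then combine a single-sweep sensitivity bound from Robinson's strong-regularity theory with a standard fixed-point perturbation argument for the outer loop.

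First, I would reformulate the feasible set so that its dependence on the path set enters only through $P$. By Proposition~\ref{prop:span_equiv}, the linear constraints used in the augmented programme (vanilla replication, no-arbitrage slopes and the affine linearised forward-vol relations of Step~4) act on the optimisation variables only through the image of the pay-off matrix, i.e.\ through $P$. Consequently, a perturbation $\|P^{(a)}-P^{(b)}\|_F\le\varepsilon$ translates into an $O(\varepsilon)$ perturbation of the constraint data (both matrix and right-hand side) in the operator norm, with a proportionality constant controlled by the Lipschitz modulus $L_{\mathrm{BS}}$ of the affine-linearised Black-vega map, which is finite as long as Vega stays bounded away from zero on the relevant volatility grid.

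Second, I would invoke Robinson's strong-regularity theorem (see~\cite{RockafellarWets09}, Chap.~12) to obtain a local Lipschitz bound on the primal-dual KKT solution with respect to this parametric perturbation. The strong-convexity modulus $\mu>0$ coming from the positive-definiteness of $\operatorname{diag}(A|_{T_w},\,\lambda B|_{T_\sigma})$ on the tangent space, together with the assumed constraint qualification, guarantees a bounded inverse of the linearised KKT operator. The resulting Lipschitz constant is of order $L_{\mathrm{BS}}/\mu$, which already delivers the bound for a single sweep of the linearised SOCP.

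Third, to promote this single-sweep estimate to the fixed-point limit, I would use a classical perturbation-of-fixed-point argument: denote the outer maps by $T^{(a)}$ and $T^{(b)}$; both are contractions with a common modulus $\rho<1$ (Theorem~\ref{thm:aug_uniqueness}), and by the previous step they satisfy $\|T^{(a)}(x)-T^{(b)}(x)\|\le C\varepsilon$ uniformly on a neighbourhood of the common fixed point. The standard inequality $\|x^{\ast}_{(a)}-x^{\ast}_{(b)}\|\le C\varepsilon/(1-\rho)$ then yields the claim with $\kappa=O\bigl(L_{\mathrm{BS}}/(\mu(1-\rho))\bigr)$. The main obstacle will be keeping all constants genuinely batch-independent: the linearisation point and the Vegas used to build the affine constraints are themselves functions of the batch, so one must restrict attention to a small neighbourhood of the common fixed point on which $L_{\mathrm{BS}}$, $\mu$ and $\rho$ admit uniform upper and lower bounds. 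Once this uniform regularity is in place, the three ingredients combine to give the stated Lipschitz bound.
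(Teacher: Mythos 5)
The paper does not actually give a proof of this lemma; it merely cites ``standard sensitivity analysis for convex programmes'' from Rockafellar--Wets, and your proposal is a faithful, correct instantiation of exactly that machinery: a Robinson-type strong-regularity sensitivity bound for a single linearised sweep, followed by the standard perturbation-of-fixed-point estimate $\|x^{\ast}_{(a)}-x^{\ast}_{(b)}\|\le C\varepsilon/(1-\rho)$ for the outer loop. You also correctly flag the genuine uniformity issue -- that the linearisation point and the Vegas defining the affine constraints are themselves batch-dependent, so $L_{\mathrm{BS}}$, $\mu$ and $\rho$ must be bounded on a common neighbourhood of the fixed point -- which the paper likewise leaves implicit.
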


\noindent\emph{Discussion.} Theorem~\ref{thm:prior_invariance} upgrades the uniqueness result of
Theorem~\ref{thm:aug_uniqueness} (existence of a single fixed-point) to \emph{prior-independence}
across generators with identical pay-off span; Lemma~\ref{lem:stability} gives a perturbation bound
via standard sensitivity analysis for convex programmes \cite{RockafellarWets09}.

\subsection{Second-Order-Cone Formulation details} \label{app:B3_SOCP_for_heston_calibration_2.0}

    Let $p_i(\theta)$ be the discounted pay-off of path $i$
    ($i=1,\dots,N$) under parameters $\theta$, and define the
    weight-dispersion vector
    $
      z = Nw - \mathbf 1 ,\;
      \mathbf 1^\top z = 0 .
    $
    Perturb each parameter by $h_q$ and set
    $
      t_i^{(q)} =
         \bigl(p_i(\theta^{(0)}+h_q e_q)-p_i(\theta^{(0)})\bigr)/h_q ,
    $
    collecting the five tilt vectors into the matrix
    $T=[t^{(1)}\,\dots\,t^{(5)}]\in\mathbb{R}^{N\times5}$.
    For an increment $\Delta\theta$ we write the \textit{residual}
    
    \begin{equation}
    r(\Delta\theta)=z-T\,\Delta\theta ,\qquad
    \mathcal V(\theta)=\frac1N\|r(\Delta\theta)\|_2^{2}.
    \end{equation}
    
    Minimising $\mathcal V$ while keeping $\theta$ inside the admissible
    Heston domain $\mathcal D_{\text{Heston}}$ is equivalent to the SOCP
    
    \begin{equation}
    \min_{z,\;\Delta\theta}\;
            \|\,z-T\Delta\theta\|_2
    \quad
    \text{s.t.}\;
            \mathbf 1^\top z=0,\;
            z\ge-\mathbf 1,\;
            \theta^{(0)}+\Delta\theta\in\mathcal D_{\text{Heston}} .
    \end{equation}
    
    The optimisation delivers simultaneously  
    $\theta^{\star}=\theta^{(0)}+\Delta\theta^{\star}$ and the
    minimum-variance weights
    $w^{\star}=\tfrac1N(\mathbf 1+z^{\star})$,
    thereby closing calibration and re-weighting in a single interior-point
    solve.

% ================================================================
% Appendix B.4 Internal barrier (Condensed Version)
% ================================================================

\subsection{Internal barrier}
\label{app_internal_barrier}

This appendix summarises the internal logarithmic barrier used to stabilise the
scenario-weights when they are treated as optimisation variables. Full solver-
level details are not required for the main text and are omitted here.

\paragraph{Definition.}
Let $w$ lie in the simplex
\[
   \Delta_N = \{ w \in R^N : w_i>0,\ \sum_i w_i=1 \}.
\]
The internal barrier is defined as
\[
   \Psi_{\text{bar}}(w)
   :=
   - \frac{1}{N} \sum_{i=1}^{N}
       \Bigl[
         \log(w_i)
         +
         \log\bigl( q/N - w_i \bigr)
       \Bigr],
\]
where $q>1$ ensures that $(0,q/N)$ strictly contains all admissible weights.
The barrier diverges at the boundaries and is strictly convex in the interior.

\paragraph{Purpose.}
The barrier prevents degeneracy of the weight vector. Without it, 
strongly binding calibration constraints
may drive the optimiser to collapse most mass on a few scenarios. The barrier
keeps $w$ in the interior of the simplex and maintains a statistically balanced
distribution consistent with the qualitative behaviour of a Monte Carlo sample.

\paragraph{Relation to entropic ideas.}
The functional form resembles entropy-based regularisers, but the interpretation
is entirely different. The barrier does not measure divergence from a reference
distribution, does not encode proximity to any prior, and does not induce any
probabilistic interpolation. Its role is purely geometric: it enforces interior
feasibility and numerical stability. This differs from entropic optimal
transport or Schr\"odinger type constructions, which explicitly penalise distance
from a fixed prior.

\paragraph{Use in the optimisation.}
The barrier enters the objective with a small coefficient that controls its
strength. It is active when no weight-dispersion penalty is present (MIN MAX
ONLY regime) or when only volatility dispersion is regularised (MIN VAR VOL).
When weight-dispersion minimisation is active (MIN ALL VARIANCES), the barrier
is no longer needed, as $\text{Var}(w)$ already enforces interior solutions.

\medskip
In summary, the internal barrier provides a simple and scale-invariant mechanism
to avoid boundary solutions and preserve a meaningful Monte Carlo structure
whenever the weights are free optimisation variables.

% --------------------------------------------------------
 \section{Appendix: Computational Performance} 
\label{Appendix_Computational_Performance}

\subsection{Guidelines for Practitioners } \label{app:guidelines_practitioners}
    
    \begin{itemize}
        \item \textbf{Choose $N$ for variance, not for convergence of the LP.}
              Once $N \approx 16\,872$, path-generation time is negligible 
              compared with  the linear-program (LP) factorisation step; 
              additional precision is better obtained by
              averaging over independent replications.
        \item \textbf{Keep strike point sizes moderate.}  A dense grid of strikes rarely changes
              the optimal weights but inflates memory and factorisation costs.
        \item \textbf{Parallel replication is embarrassingly parallel.}
              Distribute \(M\) modest-size jobs and merge outcomes by convex
              averaging.
    \end{itemize}
% ================================================================
\subsection{Calibration and Simulation Settings}
\label{subsection_Simulation_Settings}

\begin{table}[h!]
  \centering
  \footnotesize
  \setlength{\tabcolsep}{4pt}
  \renewcommand{\arraystretch}{1.15}

  \begin{tabular}{@{}l
      >{\raggedright\arraybackslash}p{0.30\textwidth}
      >{\raggedright\arraybackslash}p{0.48\textwidth}@{}}

    \hline
    \textbf{Item} & \textbf{Value} & \textbf{Notes} \\
    \hline

    MC scenarios & $N = 2^{11}$ & Fixed seed used for reproducibility \\

    Quasi random sequence & Sobol & Used only for Merton paths \\

    Time step $\Delta t$ & $1/12$ year & Between consecutive fixing dates \\

    Heston inner step $\delta t_{H}$ & $1/365$ year & Euler inner time step \\

    $S_{0}, r, \delta$ & $100, 0, 0$ &
    Spot, risk-free rate and dividend yield.
    In the experiments we set $r = 0$; however, the framework also accommodates $r \neq 0$ without modification by working with forward prices (subtracting the deterministic rate component from all paths). \\

    LP solver & Clarabel & Maximum number of iterations set to $1000$ \\

    LP solver two & HiGHS & Parallel option set to on \\

    \hline
    \textbf{Heston model parameters Set 1} &
    $v_{0} = 0.09777$, $\kappa = 2$, $\theta = 0.09777$, $\rho = -0.59993$, $\text{vol of vol} = 59.41 \%$  &
    Parameter set used in the numerical experiments \\

    \textbf{Merton model parameters Set 1} &
    $\sigma_{J} = 0.20$, $\lambda_{\text{down}} = 0.40$, $J_{\text{down}} = 0.70$, $\lambda_{\text{up}} = 0.10$, $J_{\text{up}} = 1.30$ &
    Jump size volatility set to twenty percent \\

    \textbf{Black model parameters Set 1} &
    $\sigma = 0.20$ &
    Pure Black model  \\

    \hline
    \textbf{Heston model parameters Set 2} &
    $v_{0} = 0.04939$, $\kappa = 2$, $\theta = 0.04939$, $\rho = -0.25797$, $\text{vol of vol} = 25.5 \%$  &
    Parameter set used in the numerical experiments \\

    \textbf{Merton model parameters Set 2} &
    $\sigma_{J} = 0.20$, $\lambda_{\text{down}} = 0.30$, $J_{\text{down}} = 0.85$, $\lambda_{\text{up}} = 0.075$, $J_{\text{up}} = 1.15$ &
    Jump size volatility set to twenty percent \\

    \textbf{Black model parameters Set 2} &
    $\sigma = 0.20$ &
    Pure Black model  \\

    \hline
  \end{tabular}

  \caption{Calibration and simulation settings}
  \label{general_settings}

\end{table}

\newpage

\section*{Acknowledgments}
\label{section_acknowledgments}

A special thanks goes to Prof.\ Daniel Duffy for the insightful discussions
and invaluable advice that greatly contributed to refining the overall
framework of our methodology.  His thoughtful feedback helped navigate the
complex challenges involved, ensuring a more robust and coherent approach to
exotic-derivative pricing.

\section*{Disclaimer and Conflicts of Interest}
\label{section_disclaimer}

The views and opinions expressed in this article are entirely those of 
the author and do not represent the official policies, positions, or 
opinions of Mediobanca S.p.A. or any of its subsidiaries, affiliates, or employees.  
This work is provided for academic and informational purposes only and must 
not be construed as investment advice, trading recommendations, or an offer 
to buy or sell any financial instrument.  

\paragraph{Conflicts of Interest.}
The author declares no financial or personal relationships that could have 
influenced the research presented in this paper.  \\
No external funding was received for the conduct of this study or the preparation 
of the manuscript.
% ---------- END BLOCK Appendix--to--End ----------
\end{document}